\documentclass{elsarticle}
\usepackage{times}
\usepackage{helvet}
\usepackage{courier}
\usepackage{amsfonts}
\usepackage{amsthm}
\usepackage{amsmath}
\DeclareMathOperator{\sgn}{sgn}
\DeclareMathOperator{\ds}{ds}
\DeclareMathOperator{\cs}{cs}

\setlength{\pdfpagewidth}{8.5in}
\setlength{\pdfpageheight}{11in}

\begin{document}
\theoremstyle{plain}
\newtheorem{claim}{Claim}
\newtheorem{theorem}{Theorem}
\newtheorem{lemma}{Lemma}
\newtheorem{corollary}{Corollary}
\newtheorem{proposition}{Proposition}

\begin{frontmatter}

\title{Submodular Goal Value of Boolean Functions}
\author[eb]{Eric Bach}
\address[eb]{Dept. of Comp. Sciences, University of Wisconsin,
1210 W. Dayton St., Madison, WI 53706, USA} 
\author[jd]{ J\'er\'emie Dusart}
\address[jd]{The Caesarea Rothschild Institute, University of Haifa,
Mount Carmel,
Haifa 31905, Israel}
\author[lh]{Lisa Hellerstein\corref{cor1}}
\cortext[cor1]{Corresponding author}
\ead{lisa.hellerstein@nyu.edu}
\address[lh]{Dept. of Comp. Science and Engineering,
NYU School of Engineering,
6 Metrotech Center,
Brooklyn, NY 11201, USA}
\author[dk]{Devorah Kletenik}
\address[dk]{Dept. of Comp. and Information Science,
Brooklyn College,
City University of New York,
2900 Bedford Avenue,
Brooklyn, NY 11210, USA
}

\begin{abstract}
Recently, Deshpande et al.\ introduced a new measure of the complexity of a 
Boolean function.
We call this measure the ``goal value'' of the function.
The goal value of $f$ is defined in terms of a monotone,
submodular utility function associated with $f$.
As shown by Deshpande et al., 
proving that a Boolean function $f$ has small goal value can lead to a good approximation 
algorithm for the Stochastic Boolean Function Evaluation problem for $f$.
Also, if $f$ has small goal value, it indicates a close relationship 
between 
two other measures of the complexity of $f$, its average-case
decision tree complexity 
and its average-case certificate complexity.  
In this paper, we explore the goal value measure in detail.
We present bounds on the goal values of arbitrary and specific Boolean 
functions,
and present results on properties of the measure.
We compare the goal value measure to other, previously studied, 
measures of the complexity of Boolean functions.  
Finally, we discuss a number of open questions provoked by our work.
\end{abstract}

\begin{keyword}
Boolean functions \sep submodularity \sep read-once formulas
\end{keyword}

\end{frontmatter}

\section{Introduction}
Deshpande et al.\ introduced a
new measure of the complexity of a Boolean function~\cite{deshpandeHellersteinKletenik}.
We call this measure the ``goal value'' of the function.\footnote{Deshpande et al. called it the \emph{$Q$-value}, but
we have chosen to use a more intuitive name.}
In this paper, we explore properties of the goal value measure.

We begin with some background.  The goal value measure was introduced
as part of an effort to develop approximation algorithms for the
\emph{Stochastic Boolean Function Evaluation} (SBFE) problem.  In this problem,
we are given a Boolean function $f(x_1, \ldots, x_n)$, 
and we need to evaluate it on
an initially unknown
input $x$. 
For example, consider a situation where
each bit $x_i$ of $x$ corresponds to the result of a medical test,
and $f$ is a function of the test results, indicating whether the
patient has a certain disease.

In addition to $f$, the input to the SBFE problem includes
a cost $c_i$ and probability $p_i$ for each variable $x_i$.
The cost $c_i$ is the cost associated with
determining the value of $x_i$, and $p_i$ is the
independent probability that $x_i$ is 1. 
We may choose which bits to ``buy'' sequentially, and the choice of the
next bit can depend on the choices of the previous bits.
The task in the SBFE problem is to choose the sequence of bits so as to minimize the expected cost incurred
before we determine $f(x)$.

Once we know all bits of $x$, we have enough information to determine $f(x)$.
In fact, once we know enough bits that they form a certificate of $f$
(i.e., the value of $f(x)$ is already determined by those bits alone),
we also have enough information to determine $f(x)$.
Thus any set of bits that forms a certificate has the same value to us.
However, if the bits we know do not form a certificate, then we cannot determine $f(x)$. 
Nevertheless, we may still want to think of these bits as having some value towards our goal of determining $f(x)$.
We can therefore define a (non-traditional)
utility function that quantifies the value of knowing particular subsets of the bits, by assigning
a non-negative integer value to each partial assignment.
For example, the utility function might assign a value of 4 to knowing that $x_1 = 1$ and $x_3 = 0$.
The maximum value of this utility function should be achieved precisely on those sets of bits that form a certificate.
Let us call this value $Q$.

Once we have such a utility function,
there is a natural greedy approach to choosing which bits to ``buy'': 
buy bits one by one, each time buying
the bit that achieves the largest expected increase in utility value, per unit cost, until you can determine $f(x)$.
In general, this greedy approach will not perform well.
However, if the utility function has two useful properties, namely monotonity and submodularity,
then a prior result of Golovin and Krause implies that
the expected cost of the greedy approach is at most a factor $O(\log Q)$ times the optimal expected cost~\cite{golovinKrause}.

We will define monotonicity and submodularity formally below. Intuitively,
monotonicity means that utility increases (or stays the same) as more bits are acquired.
Submodularity is a diminishing returns property:
if we consider the value of a bit to be the increase in utility that it would provide,
then postponing the purchase of a particular bit cannot increase its value.

Call a utility function with the above properties a \emph{goal function} for $f$.
The \emph{goal value} of a Boolean function $f$ is the smallest possible value of $Q$, of any
goal function for $f$.


Intuitively, we can think of a goal function for $f$ as being a
monotone, submodular surrogate for $f$. It has an expanded domain
(including partial assignments to the variables) and an expanded range (the set $\{0, 1, \ldots, Q\}$, rather than just $\{0,1\}$).
The goal value of $f$ measures how much we need to expand the range in order to 
achieve both monotonicity and submodularity.
Deshpande et al.\ showed that,
in addition to having implications for Boolean function evaluation, 
proving that
a Boolean function $f$ has small goal value indicates a close relationship 
between 
two other measures of the complexity of $f$: its average-case
decision tree complexity 
and its average-case certificate complexity.


%


In this paper, however, our focus is not on the implications of goal value. 
Instead, we focus on
understanding the fundamental properties of the goal value measure itself.
We prove new bounds on the goal values of general and specific Boolean functions. 
We also consider the related 1-goal value and 0-goal value measures.
Our results suggest a number of interesting open questions.

We begin by presenting necessary definitions and explaining the known
connections between goal value, decision tree complexity, and Boolean function evaluation. 

We then prove some generic properties of goal functions.
While a single Boolean function can have different goal value
functions, we show that two distinct Boolean functions can only have
the same goal function if they are complements of each other.
An interesting open question is whether each Boolean function has
a unique {\em optimal} goal function.

We show
that the goal value of every Boolean function $f$ is
at least $n$, if $f$ depends on all $n$ of its input variables.
We note that this lower bound is tight for certain
Boolean functions, such as AND, OR and XOR functions. We also present
goal value bounds for Boolean $k$-of-$n$ functions.

Deshpande et al.\ showed that the goal value of every Boolean function
$f$ is upper bounded by $\ds(f) \cdot \cs(f)$, where $\ds(f)$ is the minimum number of terms in a DNF formula for the function, and $\cs(f)$ is the minimum number of clauses in a CNF formula for the function~\cite{deshpandeHellersteinKletenik}. 
We show that if $f$ is a Boolean read-once function, then the goal value of $f$ is equal to
$\ds(f) \cdot \cs(f)$.  Thus for read-once functions, there is a close relationship between
goal value and the traditional complexity measures $\ds(f)$ and $\cs(f)$.

We show that the goal value of any Boolean function $f$ is at most $2^n - 1$.
We do not know how close goal value can come
to this upper bound.  Deshpande et al.\ showed that two specific Boolean functions
have goal values lower bounded by
$2^{n/2}$.  
Using our bound on the goal value of read-once functions, we 
show that there is a read-once function whose goal value is $\frac{n}{3}3^{\frac{n}{3}}$,
which is equal to
$2^{\alpha n}(n/3)$, where $\alpha = \frac{\log_2 3}{3} \approx .528$. 

We give an (exponential-sized) integer program for arbitrary Boolean function $f$
whose solution is an optimal goal value function for $f$.


\section{Terminology and Notation}

\textbf{Boolean function definitions:}
Let $V = \{x_1, \ldots, x_n\}$.
A partial assignment is a vector $b \in \{0,1,*\}^n$.
For partial assignments $a,b \in \{0,1,*\}^n$, 
$a$ is an {\em extension} of $b$, written $a \succeq b$,
if $a_i = b_i$ for all $b_i \neq *$.
We also say that $b$ is {\em contained} in $a$.

Given Boolean function $f:\{0,1\}^n \rightarrow \{0,1\}$,
a partial assignment $b \in \{0,1,*\}^n$
is a {\em 0-certificate} of $f$ if
 $f(a) = 0$ for all $a \in \{0,1\}^n$ such that $a \succeq b$.
It is a {\em 1-certificate} if $f(a) = 1$
 for all $a \in \{0,1\}^n$ such that $a \succeq b$.
It is a {\em certificate} if it is either a
0-certificate or a 1-certificate.
We say that $b$ {\em contains} a variable $x_i$ if $b_i \neq *$.
We will occasionally treat a certificate $b$ as being the set of variables $x_i$ such that $b_i \neq *$,
together with the assignments given to them by $b$.
The {\em size} of a certificate $b$ is the number of variables $x_i$ that it contains.
If $b$ and $b'$ are certificates for $x$ with $b \succeq b'$, we say that $b'$ {\em is contained in} $b$.
A certificate $b$ for $f$ is {\em minimal} if there is no certificate $b'$ of $f$, such that $b' \neq b$ and
$b \succeq b'$.

The {\em certificate size} of a Boolean function $f$ is the maximum, over all $x \in \{0,1\}^n$, 
of the size of the smallest certificate contained in $x$.
Given a distribution $D$ on $\{0,1\}^n$,
the {\em expected certificate size} of $f$ is the expected value, for $x$ drawn 
from $D$, of the size of the smallest certificate contained in $x$.
 
It is well-known that every Boolean function $f$ can be 
represented as a {\em polynomial over GF(2)}.  
That is, 
either $f$ is the constant function $f = 0$, or
$f$ is computed by an expression of the form
$T_1 + T_2 + \ldots + T_k$ for some $k > 0$, where each $T_i$ is the conjunction (i.e., AND) of the
variables in some subset of $V$, no two terms contain exactly the same subset $S_i$ of variables, and addition is modulo 2.
(If $S_i = \emptyset$, then $T_i = 1$.)  This is sometimes called the ring-sum expansion
of the function.
Each Boolean function has a unique such representation,
up to permutation of the terms $T_i$.
A variable $x_i$ appears in the GF(2) polynomial for Boolean function $f$
iff function $f$ depends on variable $x_i$.

For function $f(x_1, \ldots, x_n)$ defined on $\{0,1\}^n$,
and $b \in \{0,1,*\}$,
the function $f_b$ {\em induced on $f$ by $b$} is defined as follows.
Let $V_b = \{x_i|b_i = *\}$. 
Then $V_b$ is the set of input variables for function $f_b$,
and the domain of $f_b$ is the set of assignments
that assign 0 or 1 to each variable.
For each $a$ in the domain,
$f_b(a) = f(a \backslash b)$, where
$a \backslash b$ denotes the assignment to $V$ produced by setting
the variables in $V_b$ according to $a$, and the variables in $V-V_b$
according to $b$.
For $k \in \{0,1\}$, we use $f_{x_i \leftarrow k}$ to denote the
function induced on $f$ by the partial assignment setting $x_i$ to $k$,
and all other variables to $*$.
 
The \emph{$k$-of-$n$ function} is the Boolean function on $n$ variables 
whose output is $1$ if and only if at least $k$ of its input variables are $1$.
 
A \emph{read-once formula} is a Boolean formula
over the basis of  AND and OR gates,
whose inputs are distinct variables.
The variables may be negated.
If no variables are negated, the formula is said to be monotone.

We can view a read-once formula as a rooted tree with the following
properties:
\begin{enumerate}
\item  The leaves are labeled with distinct variables from the set $\{x_1, \ldots, x_n\}$,
for some $n \geq 1$, and each such variable may appear with or without negation
\item Each internal node is labeled AND or OR
\item Each node labeled AND or OR has at least 2 children.
\end{enumerate}
A read-once function is a Boolean function that can
be computed by a read-once formula.


A \emph{maxterm} of a monotone Boolean function $f(x_1, \ldots, x_n)$ is 
the set of variables contained in a minimal 0-certificate of $f$.
(Equivalently, it is a set of variables $S$ such that setting the
variables in $S$ to 0 forces $f$ to 0, but this is not true of
any proper subset of $S$.)
A \emph{minterm} of $f$ is 
the set of variables contained in a minimal 1-certificate of $f$.
If $F$ is a Boolean formula computing a monotone
function, we say that a subset $S$ of its variables is a maxterm (minterm) of $F$
iff $S$ is a maxterm (minterm) of the function is computes.

We use $\ds(f)$ to denote the minimum number of terms in a DNF formula for $f$, and $\cs(f)$ to denote
the minimum number of clauses in a CNF formula for $f$.

A decision tree computing a Boolean function $f(x_1, \ldots, x_n)$ is a binary tree where each binary node is 
labeled by a variable in $V$, and each each leaf is labeled either 0 or 1.
The left child of an internal node labeled $x_i$ corresponds to $x_i = 0$, and the
right child corresponds to $x_i = 1$.  The tree computes $f$ if for each $x \in \{0,1\}^n$,
the root-leaf path induced by $x$ ends in a leaf whose label is $f(x)$.
Given a decision tree $T$ computing $f$, and $x \in \{0,1\}^n$, let $Depth(T,x)$
denote the number of internal nodes on the root-leaf path in $T$ induced by $x$.
Then the depth of $T$ is the maximum value of $Depth(T,x)$ for all $x \in \{0,1\}^n$.
Given a probability distribution on $\{0,1\}^n$, the {\em expected depth of $T$} is the
expected value of $Depth(T,x)$, when $x$ is drawn from the distribution.
The {\em average depth} of $T$ is its expected depth with respect to the uniform
distribution on $\{0,1\}^n$.
The {\em decision tree depth} of Boolean function $f$ is the minimum depth of any decision tree
computing $f$.  The {\em expected decision tree depth} of Boolean function $f$,
with respect to a distribution on $\{0,1\}^n$, is the minimum expected depth
of any decision tree computing $f$, with respect to that distribution.

The {\em rank} of a (binary) decision tree $T$ is defined as 
follows~\cite{ehrenfreucht}: if $T$ is a leaf, then $\mathtt{rank}(T) = 0$. Otherwise, let $\mathtt{rank}_0$ denote the rank of the 
subtree rooted at the left child of the root of $T$ and $\mathtt{rank}_1$ denote the rank of the 
subtree rooted at the right child of the root of $T$.
Then

\[\mathtt{rank}(T) = 
\begin{cases}
\max\{\mathtt{rank}_0, \mathtt{rank}_1 \}& \mbox{ if } \mathtt{rank}_0 \neq \mathtt{rank}_1 \\
\mathtt{rank}_0 + 1 & \mbox{ otherwise }
\end{cases}
\]

A \emph{$k$-decision list} is a list of pairs $(t_i, v_i)$ where $t_i$ is a term (conjunction) of at most $k$ literals and $v_i \in \{0,1\}.$ The last term $t_i$ is the empty term, which is equal to 1. A decision list $L$ defines a Boolean function such that for any assignment $x \in \{0,1\}^n$, $L(x) = v_j$ where $j$ is the minimum index such that $t_j(x) = 1$~\cite{rivestDecisionList}.

We say that a Boolean function $f$ is computed by a \emph{degree $d$ polynomial threshold function} if there is a 
multivariate polynomial $p(x_1, \ldots, x_n)$ of (total) degree $d$, 
over the real numbers, such that for all $x \in \{0,1\}^n$, $f(x) = \sgn(p(x))$, where $\sgn(z) = 1$ if $z \geq 0$ and $\sgn(z) = 0$ if $z < 0$.   
The {\em polynomial threshold degree} of $f$ is the minimum $d$ such that $f$ is computed by a polynomial threshold function of degree $d$.

\smallskip
\textbf{Submodularity Definitions:}
 Let $N = \{1, \ldots, n\}$.
In what follows, we assume that utility functions are integer-valued.
In the context of standard work on submodularity,
a {\em utility function} is a set function
$g:2^N \rightarrow \mathbb{Z}_{\geq 0}$.
Given $S \subseteq N$ and $j \in N$, $g_S(j)$ denotes
the quantity $g(S \cup \{j\}) - g(S)$. 

We also use the term {\em utility function}
to refer to a function $g:\{0,1,*\}^n \rightarrow \mathbb{ Z}_{\geq 0}$ 
defined on partial assignments.
For $l \in \{0,1,*\}$,
the quantity $b_{x_i \leftarrow l}$ denotes the partial assignment
that is identical to $b$ except that $b_i = l$. 

A utility  function $g:\{0,1,*\}^n \rightarrow \mathbb{Z}_{\geq 0}$ is 
{\em monotone} if for $b \in \{0,1,*\}^n$,
$i \in N$ such that $b_i = *$,
and $l \in \{0,1\}$, $g(b_{x_i \leftarrow l}) - g(b) \geq 0$;
in other words, additional information can not decrease utility.
Utility function $g$ is {\em submodular}
if for all $b,b' \in \{0,1,*\}^n$
and $l \in \{0,1\}$, 
$g(b_{x_i \leftarrow l}) - g(b)  \geq g(b'_{x_i \leftarrow l}) - g(b') $
whenever $b' \succeq b$ and $b_i = b'_i = *$.
This is a diminishing returns property.

We say that $Q \geq 0$ is the {\em goal value} of $g:\{0,1,*\}^n \rightarrow \mathbb{Z}_{\geq 0}$
if $g(b) = Q$ for all $b \in \{0,1\}^n$.

A {\em goal function} for  
a Boolean function $f:\{0,1\}^n \rightarrow \{0,1\}$ is a
monotone, submodular function $g:\{0,1,*\}^n \rightarrow \mathbb{Z}_{\geq 0}$, 
with goal value $Q$, such that 
for all $b \in \{0,1,*\}^n$, $g(x) = Q$ iff $b$ contains a certificate of $f$.
We define the {\em goal value of $f$} to be the minimum goal value of any goal function $g$ for $f$.

A 1-goal function for $f$ is a 
monotone, submodular function $g:\{0,1,*\}^n \rightarrow \mathbb{Z}_{\geq 0}$, 
such that 
for some constant $Q \geq 0$, $g(b)=Q$ if $b$ is a 1-certificate of $f$,
and $g(b) < Q$ otherwise.
We call $Q$ the 1-goal value of $g$.  

We define the {\em 1-goal value of $f$} to be the minimum goal value of any 1-goal function for $f$.
The definition of a {\em 0-goal function} for $f$  and the {\em 0-goal value} of $f$ are analogous.

We denote the goal value, 1-goal value, and 0-goal value of $f$ by $\Gamma(f)$, $\Gamma^1(f)$, and $\Gamma^0(f)$ respectively.

We say that a goal function $g$ for Boolean function $f$ is {\em optimal}
if the goal value of $g$ is $\Gamma(f)$.

We use $g(x_i = \ell)$ to denote the value of $g(b)$ at the 
$b \in \{0,1,*\}^n$ that has
$b_i = \ell$ and $b_j = *$ for $j \neq i$.

We define a directed graph that we call the {\em extension graph}.
It is similar to the Boolean lattice, but for partial assignments.
The vertices of this graph correspond to the elements of $\{0,1,*\}^n$.
For $a,b \in \{0,1,*\}^n$, there is an edge from $a$ to $b$ if $b$ can be produced
from $a$ by changing one $*$ in $a$ to 0 or 1.
(Equivalently, the extension graph is
the directed graph associated with a 
Hasse diagram for the extension relation $\succeq$.)
We imagine this graph is drawn so that higher vertices are more informative.
That is, we put the $2^n$ full assignments at the top, and the 
empty assignment $** \cdots *$ at the bottom.
The {\em weight} of a
partial assignment is the number of 0's and 1's that it has.
Then, the vertices of identical weight form a level in this graph,
and there are $n+1$ levels.

\section{Previous Results on Goal Value}
\label{fun}

Before showing our new results, we review previous results on goal value.

\subsection{Goal value and DNF/CNF size}

The following fundamental facts were shown by Deshpande 
et al.~\cite{deshpandeHellersteinKletenik}.

\begin{proposition}
\label{bipartiteOR}
For any Boolean function $f$, 
$\Gamma(f) \leq \Gamma^1(f) \cdot \Gamma^0(f)$, $\Gamma^0(f) \leq \ds(f)$, and 
$\Gamma^1(f) \leq \cs(f)$.
\end{proposition}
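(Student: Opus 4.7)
The plan is to handle the three inequalities in order, with the first two being essentially dual constructions and the third built on top of them via a product.

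For $\Gamma^0(f) \leq \ds(f)$, I would fix a minimum DNF $f = T_1 \vee \cdots \vee T_m$ with $m = \ds(f)$ and take $g_0(b)$ to be the number of terms $T_i$ that $b$ has already falsified (i.e., $b$ sets some literal of $T_i$ to $0$). A partial assignment is a 0-certificate of $f$ exactly when every term is falsified, so $g_0(b) = m$ iff $b$ is a 0-certificate, making $g_0$ a valid 0-goal function of value $m$. Monotonicity is immediate because ``falsified'' is preserved under extension, and submodularity is easy: the marginal gain of setting $x_i \leftarrow \ell$ at $b$ equals the number of not-yet-falsified terms whose literal on $x_i$ is killed by this assignment, and extending $b$ to some $b' \succeq b$ can only shrink this set. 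The bound $\Gamma^1(f) \leq \cs(f)$ is completely dual: fix a minimum CNF and let $g_1(b)$ count the clauses that $b$ has already satisfied.

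For the product bound $\Gamma(f) \leq \Gamma^1(f) \cdot \Gamma^0(f)$, let $g_1$ and $g_0$ be optimal 1- and 0-goal functions with values $Q_1$ and $Q_0$, and combine them by
\[
g(b) \;=\; Q_1 Q_0 \;-\; \bigl(Q_1 - g_1(b)\bigr)\bigl(Q_0 - g_0(b)\bigr).
\]
This is a non-negative integer function (since $g_j(b) \leq Q_j$ for $j \in \{0,1\}$) that hits its maximum $Q_1 Q_0$ iff $g_1(b) = Q_1$ or $g_0(b) = Q_0$, i.e., iff $b$ contains a certificate of $f$. Monotonicity of $g$ is immediate because both factors $Q_1 - g_1$ and $Q_0 - g_0$ are non-negative and non-increasing.

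The one genuine calculation, and the step I expect to be the main obstacle, is verifying submodularity of $g$. Writing $\alpha = g_1(b_{x_i \leftarrow \ell}) - g_1(b)$ and $\beta = g_0(b_{x_i \leftarrow \ell}) - g_0(b)$, a direct expansion gives
\[
g(b_{x_i \leftarrow \ell}) - g(b) \;=\; \alpha\bigl(Q_0 - g_0(b_{x_i \leftarrow \ell})\bigr) + \beta\bigl(Q_1 - g_1(b)\bigr),
\]
with an analogous expression at any $b' \succeq b$ in terms of $\alpha', \beta'$. Submodularity of $g_1$ and $g_0$ gives $\alpha \geq \alpha' \geq 0$ and $\beta \geq \beta' \geq 0$, while monotonicity of $g_0$ and $g_1$ gives $Q_0 - g_0(b_{x_i \leftarrow \ell}) \geq Q_0 - g_0(b'_{x_i \leftarrow \ell}) \geq 0$ and $Q_1 - g_1(b) \geq Q_1 - g_1(b') \geq 0$. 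Multiplying these non-negative inequalities term by term shows that the marginal gain of $g$ at $b$ dominates that at $b'$, so $g$ is a valid goal function of value $Q_1 Q_0$, completing the proof.
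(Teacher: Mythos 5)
Your proof is correct and takes essentially the same approach as the paper: the same term-falsification and clause-satisfaction counting functions for the $\ds$ and $\cs$ bounds, and the same ``OR'' combination $g = Q_1Q_0 - (Q_1-g_1)(Q_0-g_0)$ for the product inequality. The only difference is that you carry out the submodularity verification explicitly (the paper cites it as a standard construction and omits the computation), and your expansion of the marginal gain is correct.
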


To see that $\Gamma^1(f) \leq \cs(f)$, let $\phi$ be a CNF formula
for $f$ with $\cs(f)$ clauses,  and let $g_1$ be the 1-goal value
function with goal value $\cs(f)$, such that $g_1(b)$ equals
the number of clauses of $\phi$ satisfied by $b$.  
The inequality $\Gamma^0(f) \leq \ds(f)$ is shown analogously 
using a 0-goal function $g_0$ such that $g_0(b)$ is equal to the
number of terms falsified by $b$ in a DNF for $f$ with $\ds(f)$ terms.

The proof that
$\Gamma(f) \leq \Gamma^1(f) \cdot \Gamma^0(f)$ relies
on a standard ``OR'' construction used previously by
Guillory and Blmes ~\cite{guilloryBilmesuai11}.
Let $g_1$ be a 1-goal function for $f$ and $g_2$ be a 0-goal function for $f$.
Let $Q_1$ and $Q_0$ be the goal values for these functions.
Then function $g(b) = Q_1 Q_2 - (Q_1 - g_1(b))(Q_2 - g_0(b))$
is a monotone, submodular function where $g(b) = Q_1 Q_2$
iff $g_1(b) = Q_1(b)$ or $g_0(b)=Q_0(b)$.

$\Gamma(f)$ can be exponential in $n$~\cite{deshpandeHellersteinKletenik}.
\subsection{Bounds on Decision Tree Depth}
Deshpande et al.\ also showed that the goal value of a Boolean function $f$ could be used to upper bound the expected
depth of a decision tree computing $f$, with respect to the uniform distribution on $\{0,1\}^n$.

Let $\mathsf{DT}(f)$ denote the expected decision tree depth of $f$,
and $\mathsf{CERT}(f)$ denote the expected certificate size of $f$,
where both expectations are with respect to the uniform distribution on $\{0,1\}^n$. 
Since every decision tree leaf is a certificate,
$\mathsf{CERT}(f) \leq \mathsf{DT}(f)$.
Let $g$ be an optimal goal function for $f$. 
Under the uniform distribution, when run on $g$,
the Adaptive Greedy algorithm of Golovin and Krause outputs a solution 
(corresponding to a decision tree) that is within a factor of $2(\ln \Gamma(f) + 1)$ of the expected certificate size of $f$. It follows that
\[\mathsf{DT}(f) \leq (2 \ln \Gamma(f) + 1) \mathsf{CERT}(f).\]

Thus if $\Gamma(f)$ is small, expected certificate size and expected depth are close to each other.
However, $\Gamma(f)$ can be very large.  In ~\cite{allenHellersteinUnluyurt}, it was
shown that the gap between $\mathsf{DT}(f)$ and $\mathsf{CERT}(f)$ can be exponential:
there is a function $f$ such that for any constant $0 < \epsilon < 1$, $\frac{\mathsf{DT}(f)}{\mathsf{CERT}(f)} = \Omega(2^{\epsilon \mathsf{CERT}(f)}$). This stands in marked contrast to results for \emph{worst-case} depth of decision tree and worst-case certificate size; it is known that the worst-case depth of a decision tree is at most quadratic in the size of the worst-case certificate~\cite{buhrmanWolf}. 


\section{Relations between Functions, Goal Functions, and Goal Values}

Clearly, any 1-goal function for $f$ determines $f$.  A similar
statement holds for 0-goal functions.  It is also evident that any 
goal function for a Boolean function determines all of its certificates.
It is interesting that a much stronger result is true.

\begin{theorem}
\label{reverse}
Any goal function determines the Boolean function, up to complementation.
That is, if $g$ is a goal function for a Boolean function $f$, 
then $g$ is a goal function for exactly one other Boolean function,
namely $\neg{f}$.
\end{theorem}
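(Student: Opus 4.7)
The plan is to split the theorem into two parts: first, to verify that $\neg f$ also admits $g$ as a goal function, and second, to show that no other Boolean function does.

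The first part follows immediately from the definition of certificate: every $1$-certificate of $f$ is a $0$-certificate of $\neg f$ and vice versa, so $f$ and $\neg f$ have exactly the same set of certificates (as partial assignments, ignoring type). Since the defining condition of a goal function refers only to whether $b$ contains a certificate and not to its type, the same $g$ automatically satisfies the goal function conditions for $\neg f$.

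For the second part, suppose $g$ is also a goal function for some Boolean function $h$. Comparing the conditions ``$g(b)=Q$ iff $b$ contains a certificate of $f$'' and ``$g(b)=Q$ iff $b$ contains a certificate of $h$'' shows that $f$ and $h$ share the same set of minimal certificates. It therefore suffices to establish the following combinatorial claim: if Boolean functions $f$ and $h$ have the same set of minimal certificates, then $h = f$ or $h = \neg f$.

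To prove this claim, I would proceed by contradiction, exploiting connectivity of the Hamming graph on $\{0,1\}^n$. If $h \neq f$ and $h \neq \neg f$, then $\{x : h(x) = f(x)\}$ is a nonempty proper subset of $\{0,1\}^n$, so there exist Hamming-adjacent full assignments $a, a'$ with $h(a) = f(a)$ and $h(a') \neq f(a')$. Let $b$ be the partial assignment agreeing with both $a$ and $a'$ on the coordinates where they agree, and equal to $*$ on the unique coordinate where they differ; the only full extensions of $b$ are $a$ and $a'$. The heart of the argument is a case split on whether $f(a) = f(a')$. If they are equal, $b$ is a certificate of $f$ and so contains a minimal certificate of $f$, which by hypothesis is also a minimal certificate of $h$; this forces $h$ to be constant on $\{a,a'\}$, contradicting the direct computation $h(a) = f(a) = f(a') \neq h(a')$. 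If they are unequal, the direct computation shows $h(a) = h(a')$, making $b$ a certificate of $h$; any minimal certificate of $h$ contained in $b$ is by hypothesis also a certificate of $f$, which forces $b$ itself to be a certificate of $f$, contradicting $f(a) \neq f(a')$.

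The main obstacle I anticipate is nothing deeper than keeping the case split straight: in each case one must identify the direction in which the shared-minimal-certificate hypothesis is applied and verify the $h(a)$ versus $h(a')$ computation from the case assumption together with $h(a) = f(a)$ and $h(a') = \neg f(a')$. No additional combinatorial idea is needed beyond Hamming-connectivity and the fact that any assignment extending a certificate is itself a certificate.
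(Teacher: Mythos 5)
Your proof is correct and rests on the same key observation as the paper's: for Hamming-adjacent full assignments $a,a'$, whether the partial assignment obtained by erasing the differing bit is a certificate tells you exactly whether $f(a)=f(a')$, and connectivity of the hypercube then pins $f$ down up to complementation. The paper packages this as an explicit BFS two-coloring recovery procedure, whereas you run a contradiction at a boundary edge of the agreement set $\{x : h(x)=f(x)\}$; this is a presentational difference only, and your case analysis is sound.
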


\begin{proof}
Let $g$ be a goal function for some Boolean function 
$f:\{0,1\}^n \rightarrow \{0,1\}$.
Let the goal value of $g$ be $Q$.  Since every 
assignment in $\{0,1\}^n$ is a certificate
of $f$, $g(a) = Q$ for all $a \in \{0,1\}^n$.
Now, consider two assignments in $\{0,1\}^n$ that differ by
the setting of one bit, say:
\begin{eqnarray*}
        a & = ...0... \\
        b & = ...1...
\end{eqnarray*}

The partial assignment with that bit erased is
\begin{eqnarray*}
        c & = ...*... \cr
\end{eqnarray*}
which is right below $a$ and $b$ in the
extension graph.  If $c$ has value $Q$, then $f(a) = f(b)$
and if $c$ has value $<Q$, we know that $f(a) \ne f(b)$.  This allows
us to classify all edges $a-b$ of the Boolean hypercube as one where $f$
takes the same values on the endpoints, or one where different
values are taken.  Therefore, the following ``is you is or is you ain't''
procedure can be used to recover $f$ (up to complementation):
\smallskip
{
\parindent 0.5 in
\obeylines
  There are two colors, red and white.
  Choose any corner of the hypercube, and color it red (say).
  Now do breadth first search starting with this corner.
     \qquad When a ``new'' vertex $w$ is discovered, it will be from
     \qquad\quad a previously colored vertex $v$.  Use the classification of
     \qquad\quad $(v,w)$ to color $w$.
}
\smallskip
At the end of the procedure, all vertices are colored, and we know
that there is a value $x$ (0 or 1) such that
such that $f(v) = x$ (resp. $\bar x$) for all red (resp. white)
vertices $v$. 
\end{proof}

To illustrate this theorem, we start with a goal function
for OR.  Here, we call the four full assignments $A,B,C,D$.
Edges in the graph are not shown.
$$
\begin{array}{cccccl}
        A   &   B  &    &   C     &   D   &                              \\
            &      &    &         &       &                              \\
       (2)  &  (2) &    &   (2)   &   (2) & \leftarrow \hbox{value}      \\
       00   &  10  &    &   01    &   11  & \leftarrow \hbox{assignment} \\
            &      &    &         &       &                              \\
       (1)  &  (2) &    &   (1)   &   (2) &                              \\
       *0   &  *1  &    &   0*    &   1*  &                              \\
            &      &    &         &       &                              \\
            &      & (0)&         &       &                              \\
            &      & ** &         &       &  
\end{array}
$$
Inspecting this labeled graph we see that
$$
          f(A) \ne f(B), \qquad f(C) =  f(D), \qquad f(B) =  f(D) ,
$$
so $A$ is mapped to one Boolean value
and the other assignments are mapped to the complementary value.
Thus $f$ is either OR or not-OR.  

Observe that the function recovery procedure need not examine
the entire goal function.  It can query one top level value
(to learn $Q$) and then $2^n - 1$ values at the next level down, 
corresponding to the partial assignments setting
a designated bit to $*$, and the remaining bits to values in $\{0,1\}$.

There can be more than one goal function for a Boolean function,
even if all goal functions are normalized to start at 0.  
For example, if $g$ is a goal function for $f$,
and $k$ is any positive integer,
let 
$g':\{0,1,*\}^n \rightarrow \mathbb{Z}_{\geq 0}$ be such
that $g'(b) = g(b)$ for $b = (*,*, \cdot, *)$,
and $g'(b) = g(b) + k$ otherwise.
Then $g'$ is also a goal function for $f$.

We can also multiply any goal function by a positive integer
constant, and get another one.  For this reason, the values
(extension graph labels) taken by an optimal goal function must
be relatively prime.


A more interesting question to ask,
then, is this: can a Boolean function have more than one 
{\it optimal} goal function?

A related question is to determine which Boolean functions
have equal goal values.
It is easy to prove the following proposition, which says
that goal value is invariant under some
simple transformations.  

\begin{proposition}
\label{negate}
Let $f$ be a Boolean function. If $f' = \neg f$, then
$\Gamma(f') = \Gamma(f)$.  If
$f'$ is any function produced from $f$ by replacing an input variable 
of $f$ by its negation, then $\Gamma(f') = \Gamma(f)$.
Finally, if $f'$ is produced from $f$ by permuting the input variables of $f$,
then $\Gamma(f)=\Gamma'(f)$.
\end{proposition}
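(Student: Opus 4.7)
The plan is to establish each of the three claims by constructing, for every goal function $g$ of $f$, a corresponding goal function $g'$ of $f'$ with the same goal value; by symmetry this gives $\Gamma(f')\le\Gamma(f)$ and $\Gamma(f)\le\Gamma(f')$, yielding equality. The key observation is that monotonicity, submodularity, and the certificate-containment condition are all insensitive to the specific labeling of bits as $0$ or $1$, and to the ordering of the coordinates.

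For the first claim, note that a partial assignment $b\in\{0,1,*\}^n$ contains a certificate of $f$ if and only if it contains a certificate of $\neg f$, since a $0$-certificate of $f$ is precisely a $1$-certificate of $\neg f$ and vice versa. Hence the very same $g$ serves as a goal function for $\neg f$ with the same goal value $Q$, so $\Gamma(\neg f)=\Gamma(f)$. (This is also immediate from Theorem~\ref{reverse}.)

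For the second claim, suppose $f'(x_1,\ldots,x_n)=f(x_1,\ldots,\neg x_i,\ldots,x_n)$. Define the involution $\sigma:\{0,1,*\}^n\to\{0,1,*\}^n$ that flips coordinate $i$ (mapping $0\mapsto 1$, $1\mapsto 0$, $*\mapsto *$) and leaves all other coordinates fixed. Given a goal function $g$ for $f$, set $g'(b)=g(\sigma(b))$. The map $\sigma$ preserves the extension relation $\succeq$ and commutes with the operation $b\mapsto b_{x_j\leftarrow \ell}$ up to relabeling the value at coordinate $i$; since the definitions of monotonicity and submodularity quantify over \emph{both} values $\ell\in\{0,1\}$, they are preserved under $\sigma$. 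Moreover, $b$ contains a certificate of $f'$ iff $\sigma(b)$ contains a certificate of $f$, so $g'(b)=Q$ exactly on the certificates of $f'$. Thus $g'$ is a goal function for $f'$ with the same goal value.

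For the third claim, if $f'$ is obtained from $f$ by permuting inputs via $\pi\in S_n$, define $\tau:\{0,1,*\}^n\to\{0,1,*\}^n$ by $\tau(b)_j=b_{\pi(j)}$ and set $g'(b)=g(\tau(b))$. Again $\tau$ is a bijection preserving $\succeq$ and commuting (with the appropriate relabeling of the index) with the single-bit update operation, so monotonicity and submodularity transfer immediately, and the certificates of $f'$ correspond bijectively to those of $f$. Hence $g'$ is a goal function for $f'$ with goal value $Q$.

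Nothing in this proof is deep; the only point requiring a moment's care is verifying that the submodularity inequality really is symmetric in $\ell\in\{0,1\}$, which it is, since it is required to hold for \emph{every} $\ell$ at \emph{every} admissible coordinate. I expect the main (very mild) obstacle to be stating the bijections $\sigma$ and $\tau$ cleanly enough that the invariance of the certificate-containment condition is transparent. There is a small notational glitch in the statement of the proposition ($\Gamma'(f)$ for $\Gamma(f')$ in the last sentence); the intended equality is $\Gamma(f)=\Gamma(f')$.
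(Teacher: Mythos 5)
Your proof is correct. The paper itself gives no proof of this proposition (it is dismissed with ``It is easy to prove''), and your pullback-by-bijection argument --- reusing $g$ itself for $\neg f$, and transporting $g$ along the coordinate-flip $\sigma$ and the permutation $\tau$, with the observation that monotonicity and submodularity are quantified over both values $\ell\in\{0,1\}$ and all coordinates and hence are preserved --- is exactly the intended argument and correctly fills that gap.
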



Let us call
two Boolean functions (both defined on $n$ variables) C-equivalent if one can be turned into
the other by complementing inputs and/or the output.  
By Proposition~\ref{negate}, C-equivalent
Boolean functions have the same goal value.  
However, the converse is not true.  For example,
we show below (Section~\ref{lb}) that
AND and XOR have goal value $n$, but they are not C-equivalent,
because C-equivalent functions have the same Hamming weight mod 2.

We note that a formula for the number of C-equivalence classes
appears in the literature \cite[p. 152]{harrison}:
$$
\frac{ 2^{2^n} + (2^n - 1)2^{2^{n-1}+1} }
     { 2^{n+1} } .
$$
This is sequence A000133 in OEIS. It begins
$2, 5, 30, 2288, 67172362, \ldots$\ .
There is an interesting heuristic interpretation for this.
The leading term comes from taking the total number of functions, 
$2^{2^n}$, and dividing by the size of the symmetry group, $2^{n+1}$.


Now let us call two Boolean functions (both defined on $n$ variables) G-equivalent if they
have the same goal value.
It is an open question to determine the number of equivalence class of
Boolean functions with respect to G-equivalence.


\section{Lower Bound on Goal Value}
\label{lb}
In this section, we seek to lower bound the goal value for all Boolean functions. 
Let $f:\{0,1\}^n \rightarrow \{0,1\}$ be a Boolean function depending on all its variables.

We begin with the following lemma:
\begin{lemma}
\label{propnonzero}
Let $f(x_1, \ldots, x_n)$ be a Boolean function.
Let $\ell,k \in \{0,1\}$.  
Let $g$ be a goal function for $f$.
If there is a minimal certificate of $f$
setting $x_i = \ell$, then 
$g(x_i=\ell) - g(*, \ldots, *) \geq 1$. 

Similarly, if $g'$ is a $k$-goal function for $f$, and
there is a minimal $k$-certificate for $f$ setting
$x_i = \ell$, which has size $s$, then
$g'(x_i=\ell) - g'(*, \ldots, *) \geq 1$.
Further, if the size of that certificate is $s$,
then the goal value of $g'$ is at least $s$.
\end{lemma}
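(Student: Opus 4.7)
The plan is to extract all three inequalities from a single application of submodularity, combined with integrality and the defining property that a goal (or $k$-goal) function attains its maximum $Q$ precisely on partial assignments containing a certificate (or $k$-certificate) of the relevant type.

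For the first statement, I would take a minimal certificate $c$ of $f$ with $c_i = \ell$ and let $c'$ be the partial assignment obtained from $c$ by resetting position $i$ back to $*$. Since $c$ is a certificate, $g(c) = Q$; since $c$ is minimal, $c'$ is not a certificate, so $g(c') < Q$. Because $g$ is integer-valued, $g(c) - g(c') \geq 1$. Applying the submodularity inequality to $b = (*, \ldots, *)$ and $b' = c'$ (both have coordinate $i$ equal to $*$, and $c' \succeq (*, \ldots, *)$) then yields
\[
g(x_i = \ell) - g(*, \ldots, *) \;\geq\; g(c) - g(c') \;\geq\; 1,
\]
which is the desired bound.

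For the $k$-goal function version, I would run essentially the same argument, after first noting that a minimal $k$-certificate cannot strictly contain any other certificate: it cannot contain a $(1-k)$-certificate (such containment would force $f$ to take contradictory values on a common extension), and by minimality it contains no smaller $k$-certificate. So removing $x_i$ from the minimal $k$-certificate $c$ still yields a non-$k$-certificate, and the defining property of a $k$-goal function gives $g'(c') < Q$; the rest of the argument is identical.

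For the final bound $Q \geq s$, I would iterate this submodularity argument along a chain from $(*, \ldots, *)$ up to $c$. Writing the variables set by $c$ as $x_{i_1}, \ldots, x_{i_s}$, define $b_j$ to agree with $c$ on $x_{i_1}, \ldots, x_{i_j}$ and be $*$ elsewhere, so that $b_0 = (*, \ldots, *)$ and $b_s = c$. Let $c^{(j)}$ denote $c$ with coordinate $i_{j+1}$ reset to $*$; by minimality $c^{(j)}$ is not a $k$-certificate, so $g'(c^{(j)}) < Q$. Applying submodularity to $b_j \preceq c^{(j)}$ with coordinate $i_{j+1}$ equal to $*$ gives
\[
g'(b_{j+1}) - g'(b_j) \;\geq\; g'(c) - g'(c^{(j)}) \;\geq\; 1.
\]
Telescoping from $j = 0$ to $j = s-1$ yields $Q - g'(b_0) \geq s$, and since $g'(b_0) \geq 0$, the goal value is at least $s$. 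The only real subtlety, rather than a true obstacle, is the observation that a minimal $k$-certificate behaves like a minimal certificate under single-variable deletion; once that is noted, the whole lemma reduces to a clean telescoping of the submodularity inequality with integrality.
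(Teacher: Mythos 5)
Your proposal is correct and follows essentially the same route as the paper: both obtain each inequality by applying submodularity between a prefix of the certificate (or the empty assignment) and the certificate with the relevant coordinate reset to $*$, using minimality plus integrality to get the gap of at least $1$, and then telescoping along the chain to get the goal value bound of $s$. The only cosmetic difference is that your write-up is cleaner about which pair the submodularity inequality is applied to; the underlying argument is identical.
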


\begin{proof}
Let $g$ be a goal function for $f$, and let $Q$ be its goal value.
Let $g'$ be a $k$-goal function for $f$, and let $Q'$ be its $k$-goal value.
Let $d$ be a minimal $k$-certificate for $f$. Thus $g(d) = Q$ and $g'(d) = Q'$.

Let $s$ be the size of certificate $d$,
and without loss of generality, assume that $x_1, \ldots, x_s$ are the variables contained in $d$.
Consider the sequence $d^0, d^1, \ldots, d^s$ where $d^0 = (*, \ldots, *)$ and
$d^i = (d_1, d_2, \ldots, d_i, *, \ldots, *)$ for $1 \leq i \leq s$.
Consider a fixed $i$ such that $1 \leq i \leq s$.
By monotonicity, $g(d^{i-1}) \leq g(d_i)$.
Suppose $g(d^{i-1}) = g(d_i)$.  Let $\hat{d}$ be the partial assignment
that is obtained from $d^s$ by setting $x_i$ to $*$.
Since $d$ is a minimal certificate for $f$, $\hat{d}$ is not a certificate,
and so $g(\hat{d}) < Q$, and thus $g(d) - g(\hat{d})  \geq 1$.
Since $d$ and $\hat{d}$ differ only in the assignment to $x_i$,
and $\hat{d} \succeq d^{i-1}$,
by submodularity, 
$g(d^i) - g(d^{i-1})  \geq 1$ and $g(x_i = d_i) - g(*, \ldots, *) \geq 1$.

The same argument shows that
$g'(d^i) - g'(d^{i-1})  \geq 1$ and $g'(x_i = d_i) - g(*, \ldots, *) \geq 1$.
From the first inequality, we get that $g'(d) \geq s$.
The lemma follows.
\end{proof}

The proof of the above lemma also shows that $\Gamma(f) \geq d$,
where $d$ is the size of a minimal certificate for $f$.
But $d \leq n$, and we will prove that $\Gamma(f) \geq n$.
We use the following lemma, which is Theorem 1 in~\cite{salomaa}, and was also reproven in~\cite{ehrenfeucht}.
 
\begin{lemma}~\cite{salomaa}
\label{fixone}
Let $f(x_1, \ldots, x_n)$ be a Boolean function depending on all $n$ of its variables.
Then there exists a variable $x_i$ and a value $\ell \in \{0,1\}$ such that the induced function $f_{x_i \leftarrow \ell}$
depends on all $n-1$ of its input variables.
\end{lemma}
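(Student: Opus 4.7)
My plan is to proceed by induction on $n$. The base case $n = 1$ is trivial: any restriction of a Boolean function on one variable yields a constant function on zero variables, which vacuously depends on all of them.

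For the inductive step with $n \geq 2$, fix any variable, say $x_n$, and set $g = f_{x_n \leftarrow 0}$ and $h = f_{x_n \leftarrow 1}$. These are functions on $\{x_1, \ldots, x_{n-1}\}$. The key observation is that for each $j < n$, at least one of $g, h$ must depend on $x_j$, since otherwise $f$ would not depend on $x_j$. If $g$ or $h$ itself depends on all $n-1$ remaining variables, we are done by choosing $i = n$ with the corresponding value of $\ell$.

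Otherwise, let $A$ be the nonempty set of variables in $\{x_1, \ldots, x_{n-1}\}$ on which $g$ does not depend, and let $B$ be the analogous set for $h$; by the key observation $A$ and $B$ are disjoint. I would apply the induction hypothesis to $g$ viewed as a function on its (fewer than $n-1$) essential variables, obtaining $x_j$ and $m \in \{0,1\}$ such that $g_{x_j \leftarrow m}$ depends on every other essential variable of $g$. The goal is to show that when $j \in B$ can be arranged, the restriction $f_{x_j \leftarrow m}$ depends on all $n-1$ remaining variables of $f$. Dependence on the other essential variables of $g$ is inherited directly from $g_{x_j \leftarrow m}$; dependence on each variable in $A$ follows because $h_{x_j \leftarrow m} = h$ (as $h$ is insensitive to $x_j \in B$) and $h$ depends on $A$; and dependence on $x_n$ holds because $g_{x_j \leftarrow m}$ and $h_{x_j \leftarrow m} = h$ have essential variable sets that differ on $A$ (the former is disjoint from $A$, the latter contains the nonempty $A$) and so cannot coincide as functions.

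The main obstacle will be ensuring that the $x_j$ produced by the induction hypothesis lies in $B$ rather than in the shared essential set $S_g \cap S_h$ of $g$ and $h$. When this fails for $g$, I would symmetrically apply the induction hypothesis to $h$, hoping to obtain a variable in $A$; by the analogous argument this would again suffice. The subtle remaining case, in which both symmetric applications happen to yield variables only in the shared essential set, demands either a strengthening of the induction hypothesis (to let us prescribe where the good variable lies) or a finer structural analysis, most naturally via the ring-sum expansion of $f$ and the preorder on variables given by term-containment. This combinatorial case is the crux of the proof.
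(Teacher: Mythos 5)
First, a point of reference: the paper does not prove this lemma at all --- it is quoted as Theorem~1 of Salomaa's paper (and reproven in the cited Ehrenfeucht et al.\ reference), so there is no in-paper proof to compare yours against; your attempt must stand on its own. Judged that way, it has a genuine gap, and it is the one you yourself flag in your last paragraph. The inductive hypothesis hands you \emph{some} essential variable $x_j$ of $g$ with a good restriction, but gives you no control over whether $x_j$ lands in $B = E_g \setminus E_h$ rather than in the shared essential set $E_g \cap E_h$. The conditional part of your argument is sound: if the variable returned for $g$ lies in $B$ (or, symmetrically, the one returned for $h$ lies in $A$), your three checks --- dependence on $E_g \setminus \{x_j\}$ via the hypothesis, on $A$ via $h_{x_j \leftarrow m} = h$, and on $x_n$ because the two restrictions have different essential-variable sets --- all go through. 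But the case where both applications return only shared variables is not a ``subtle remaining case''; it is the whole content of the lemma in disguise, and the induction does not close without it. Nor is the obvious repair available: the strengthening ``the good variable can be prescribed'' is false in general, as the three-variable multiplexer $x_1 x_2 \oplus \overline{x_1} x_3$ shows (neither restriction of $x_1$ keeps both other variables essential), so any fix would have to exploit the special structure of $B$, which you have not done.

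The standard proof --- Salomaa's, and the one in the reference the paper cites --- is exactly the ``finer structural analysis'' you gesture at in your closing sentence, and it needs no induction. Work with the ring-sum expansion of $f$ and the preorder $x_j \preceq x_i$ meaning ``every term containing $x_j$ also contains $x_i$.'' If some $x_i$ has no $x_j \preceq x_i$ with $j \neq i$, then $f_{x_i \leftarrow 0}$, whose expansion consists precisely of the terms avoiding $x_i$, still contains every other variable, so it depends on all of them. Otherwise every variable is preceded by another, so transitivity forces a cycle, i.e.\ a class $C$ with $|C| \geq 2$ of variables occurring in exactly the same terms; writing $f = \bigl(\prod_{x \in C} x\bigr) R \oplus S$ with $S$ the terms disjoint from $C$, setting one $x_i \in C$ to $1$ yields $\bigl(\prod_{x \in C \setminus \{x_i\}} x\bigr) R \oplus S$ with no cancellation between the two groups (they are separated by the nonempty set $C \setminus \{x_i\}$), so every variable other than $x_i$ survives. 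I would recommend abandoning the $g$/$h$ decomposition in favor of this direct argument; as written, your proof is incomplete.
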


We now have the following theorem:

\begin{theorem}
\label{lowerbn}
Let $f:\{0,1\}^n \rightarrow \{0,1\}$ be a Boolean function. 
If $f$ depends on exactly $n'$ of its input variables, then
$\Gamma(f) \geq n'$.  Further, $\Gamma^1(f) + \Gamma^0(f) \geq n'+1$.
\end{theorem}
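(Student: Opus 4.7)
The plan is to prove both inequalities simultaneously by induction on the number of variables, after reducing to the case where $f$ depends on all of its input variables. The reduction is routine: if $f:\{0,1\}^n\to\{0,1\}$ depends on only $n'$ essential variables and $f'$ is its restriction to those, then projecting any (1-/0-)goal function for $f$ onto the essential coordinates by freezing the irrelevant ones to $*$ gives a (1-/0-)goal function for $f'$ of the same value, and conversely any (1-/0-)goal function for $f'$ can be extended to one for $f$ by ignoring the irrelevant variables. Hence $\Gamma, \Gamma^1, \Gamma^0$ agree on $f$ and $f'$, so it suffices to prove $\Gamma(f) \ge n$ and $\Gamma^1(f)+\Gamma^0(f) \ge n+1$ under the assumption that $f$ depends on all $n$ of its variables.

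The base case $n=1$ is immediate: $f$ is $x_1$ or $\neg x_1$, so the two full assignments are certificates of opposite types while the empty assignment is not; integrality of any (1-/0-)goal function then forces its value to be at least $1$, giving both bounds.

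For the inductive step, use Lemma~\ref{fixone} to pick a variable $x_i$ and a value $\ell\in\{0,1\}$ such that $f_{x_i\leftarrow\ell}$ depends on all $n-1$ remaining variables. The key construction, given any (1-/0-)goal function $g$ for $f$ of value $Q$, is the function on $V\setminus\{x_i\}$ defined by
\[
   h(b') \;=\; g(b'') \;-\; g(x_i = \ell),
\]
where $b''$ is the partial assignment on $V$ that agrees with $b'$ on $V\setminus\{x_i\}$ and sets $x_i=\ell$. Monotonicity of $g$ makes $h$ nonnegative and monotone, submodularity of $g$ passes directly to $h$, and the identity ``$b'$ contains a (1-/0-)certificate of $f_{x_i\leftarrow\ell}$ iff $b''$ contains a (1-/0-)certificate of $f$'' shows that $h$ is a (1-/0-)goal function for $f_{x_i\leftarrow\ell}$ with value $Q-g(x_i=\ell)$. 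Consequently, taking $g$ optimal yields $\Gamma(f) \ge \Gamma(f_{x_i\leftarrow\ell}) + g(x_i=\ell)$, and analogously $\Gamma^k(f) \ge \Gamma^k(f_{x_i\leftarrow\ell}) + g_k(x_i=\ell)$ for $k\in\{0,1\}$ with optimal $g_k$.

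To close the induction I would invoke Lemma~\ref{propnonzero}. Because $f$ depends on $x_i$, there exist $a,a'\in\{0,1\}^n$ differing only in coordinate $i$ with $f(a)\ne f(a')$; the one of $a,a'$ that sets $x_i=\ell$ contains some minimal certificate of $f$, and that certificate must set $x_i=\ell$, since otherwise it would also be contained in the other assignment and force $f(a)=f(a')$. For the first inequality this produces $g(x_i=\ell)\ge 1$ via Lemma~\ref{propnonzero}, so combining with the inductive bound $\Gamma(f_{x_i\leftarrow\ell})\ge n-1$ gives $\Gamma(f)\ge n$. For the second inequality the same minimal certificate is either a minimal 1-certificate or a minimal 0-certificate of $f$, so the $k$-goal clause of Lemma~\ref{propnonzero} gives at least one of $g_1(x_i=\ell), g_0(x_i=\ell) \ge 1$, supplying the extra $+1$ that upgrades the inductive sum $n$ to the desired $n+1$. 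The main obstacle I anticipate is the careful verification of the certificate correspondence used in showing that $h$ is a (1-/0-)goal function for $f_{x_i\leftarrow\ell}$; once that is in hand, both parts follow from a direct chain of inequalities combining Lemmas~\ref{fixone} and~\ref{propnonzero} with submodularity of $g$.
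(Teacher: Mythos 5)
Your proof is correct and follows essentially the same route as the paper's: both reduce to the case where $f$ depends on all its variables, peel off one variable at a time using Lemma~\ref{fixone}, and extract an increment of at least $1$ per step (split between the $1$- and $0$-goal functions for the second inequality) via Lemma~\ref{propnonzero}. The only difference is presentational — you package the argument as an explicit induction with a normalized induced goal function $h$, while the paper unrolls the same induction into a telescoping chain $b^0,\ldots,b^n$ and collects the final $+2$ for the second inequality at the top of the chain rather than in the base case.
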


\begin{proof}
We prove the theorem in the case that $f$ depends on all $n$ of its input variables, so $n'=n$.
This is without loss of generality, because
if $n' < n$, we can consider instead the restriction of $f$ to the variables on which it depends.
It is easy to verify that this causes no change in the values of $\Gamma$, $\Gamma^1$, and $\Gamma^0$.

By repeated application of Lemma~\ref{fixone}, it follows that
there exists a sequence of partial assignments, $b^0, b^1, \ldots, b^{n'}$ in $\{0,1,*\}^n$, such that
$b^0$ is the all-$*$ assignment, $b^{n}$ has no $*$'s, each $b_{i+1}$ is an extension of $b^i$ produced by changing exactly one
$*$ to a non-$*$ value, and the function $f^i$ that is induced from $f$ by partial assignment $b^i$
depends on all $n-i$ of its variables.

Without loss of generality, assume $b^i$ assigns values in $\{0,1\}$ to variables $x_{n-i+1}, \ldots, x_n$,
and sets variables $x_1, \ldots, x_{n-i}$ to $*$.
Thus $f^i:\{0,1\}^{n-i} \rightarrow \{0,1\}$.

Let $g$ be a goal function for $f$. 
For $i \in \{1,\ldots, n\}$, let $g^i:\{0,1,*\}^{n-i} \rightarrow \mathbb{Z}_{\geq 0}$ be the
function induced on $g$ by $b^i$.
Thus for $a \in \{0,1,*\}^{n-i}$,
$g^{i}(a) = g(a \backslash b^i)$, where $a \backslash b^i$ denotes the assignment produced by setting
the variables in $\{x_1, \ldots, x_{n-i}\}$ according to $a$, and the remaining variables of $g$ according to $b^i$.
From the fact that $g$ is a goal function for $f$, it easily follows that
$g^i$ is a goal function for $f^i$, with goal value equal to the goal value of $g$.

Let $x_{n-i+1} \leftarrow \ell$, where $\ell \in \{0,1\}$, be the one-variable assignment that extends $b^{i-1}$ to produce $b^i$.
Since $f^{i-1}$ depends on $x_{n-i+1}$, 
there is an assignment $b \in \{0,1\}^{n-i+1}$ 
such that $f^{i-1}(b_{x_{n-i+1}} \leftarrow 0) \neq f^{i-1}(b_{x_{n-i+1}} \leftarrow 1)$.
Hence there
is a minimal 0-certificate or a minimal 1-certificate for $f^{i-1}$ setting $x_{n-i+1}$ to $\ell$.
Thus by Lemma~\ref{propnonzero}, $g^{i-1}(x_{n-i+1}=\ell) - g^{i-1}(*,\ldots,*) \geq 1$,
and hence $g(b^i) - g(b^{i-1}) \geq 1$.
Thus the value of $g$ increases by at least
1 for each $b^i$ in the sequence $b^0, b^1, \ldots, b^n$, and so $g(x_n) \geq n$.  This proves that 
$\Gamma(f) \geq n$.

We now need to prove that
$\Gamma^1(f) + \Gamma^0(f) \geq n+1$. 
Let $g'$ be a 1-goal function for $f$, and let 
$g''$ be a 0-goal function for $f$.
Considering the sequence $b^0, \ldots, b^n$ again, the above argument, together with
Lemma~\ref{propnonzero}, shows that for each $i \in \{1,\ldots, n-1\}$, $g'(b^i) - g'(b^{i-1}) \geq 1$ or
$g''(b^i) - g''(b^{i-1}) \geq 1$.
Thus $g'(b^{n-1}) + g''(b^{n-1}) \geq n-1$.

Either the
assignment $x_1 = b^n_1$ is in a 1-certificate for $f^{n-1}$, or in a 0-certificate (or both).
Without loss of generality, assume it is in a 1-certificate.  Then assignment $x_1 = \neg{b^n_1}$ is in a 0-certificate
for $g^{n-1}$.  Let $q'$ be the goal value of $g'$ and $q''$ be the goal value for $g''$.
Since $b^{n-1}$ contains neither a 1-certificate nor a 0-certificate for $f$,
$g'(b^{n-1}) \leq q'-1$ and 
$g''(b^{n-1}) \leq q''-1$. 
It follows that $n-1 \leq (q'-1) + (q'' - 1)$, and so
$q' + q'' \geq n+1$.  The theorem
 follows.
\end{proof}

Furthermore, these bounds can be tight, as illustrated in the following two propositions:
\begin{proposition}
\label{andorxor}
The AND, OR, and XOR functions, and their negations, have goal value $n$.
\end{proposition}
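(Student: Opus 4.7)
The plan is to get the lower bound for free from Theorem~\ref{lowerbn} (each of AND, OR, XOR depends on all $n$ of its variables, so $\Gamma(f)\ge n$), and then to produce three matching upper bound constructions. By Proposition~\ref{negate}, I only need to handle the three un-negated functions.

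For OR, I would invoke Proposition~\ref{bipartiteOR}. The function $x_1\vee\cdots\vee x_n$ has a single-clause CNF, so $\cs(\mathrm{OR})=1$ and hence $\Gamma^1(\mathrm{OR})\le 1$; it has an $n$-term DNF, so $\ds(\mathrm{OR})=n$ and hence $\Gamma^0(\mathrm{OR})\le n$. Multiplying via $\Gamma(f)\le \Gamma^1(f)\cdot\Gamma^0(f)$ gives $\Gamma(\mathrm{OR})\le n$. For AND I can either argue symmetrically ($\ds(\mathrm{AND})=1$, $\cs(\mathrm{AND})=n$) or simply observe that $\mathrm{AND}(x_1,\dots,x_n)=\neg\mathrm{OR}(\neg x_1,\dots,\neg x_n)$ and invoke Proposition~\ref{negate}.

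XOR is the interesting case, because both $\ds$ and $\cs$ blow up exponentially, so Proposition~\ref{bipartiteOR} gives nothing useful. Here I would exhibit a direct goal function, namely the weight function
\[
g(b) \;=\; \bigl|\{\,i : b_i \neq *\,\}\bigr|.
\]
The verification is essentially immediate: changing any $*$ to a $0$ or $1$ increases $g$ by exactly $1$, which gives both monotonicity and submodularity (every marginal equals $1$, so the submodular inequality holds with equality). The key structural fact that makes this work for XOR specifically is that XOR is sensitive to every variable on every input, so the only certificates of XOR are full assignments in $\{0,1\}^n$; hence $g(b)=n$ if and only if $b$ has no $*$'s, if and only if $b$ contains a certificate. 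Thus $g$ is a goal function for XOR with goal value $n$.

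Finally, Proposition~\ref{negate} extends all three results to $\neg\mathrm{AND}$, $\neg\mathrm{OR}$, and $\neg\mathrm{XOR}$. The only step that required any thought was XOR, and even there the obstacle is conceptual rather than technical: one has to notice that the trivial ``count known bits'' function is already a valid goal function precisely because XOR has no proper certificates.
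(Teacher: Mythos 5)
Your proposal is correct and follows essentially the same route as the paper: the lower bound comes from Theorem~\ref{lowerbn}, the XOR upper bound uses exactly the same ``count the known bits'' goal function (justified, as you note, by XOR having no proper certificates), and negations are handled via Proposition~\ref{negate}. The only cosmetic difference is that for AND/OR you derive the upper bound from Proposition~\ref{bipartiteOR} with $\cs(\mathrm{OR})=1$, $\ds(\mathrm{OR})=n$, whereas the paper writes out the resulting goal function explicitly; the two yield the same construction.
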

\begin{proof}
Their goal values must be at least $n$, by Theorem~\ref{lowerbn}. 
For AND, consider utility function $g$ whose value equals $n$
on any partial assignment having at least one 0, and 
whose value equals the number of 1's otherwise.
This is a goal function for AND,  and there is a dual goal function for OR. 
For XOR, consider the goal function whose value on a partial assignment is
equal to the number of variables set to 0 or 1.
\end{proof}

\begin{proposition}
\label{prop:k-of-n}
Let $f:\{0,1\}^n \rightarrow \{0,1\}$ be a Boolean $k$-of-$n$ function.
Then $\Gamma^1(f) = k$ and $\Gamma^0(f) = n - k + 1$.
\end{proposition}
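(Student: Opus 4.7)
The proof splits cleanly into lower and upper bounds. For the lower bounds, I would first identify the minimal certificates of a $k$-of-$n$ function $f$: a minimal 1-certificate is any partial assignment that sets exactly $k$ distinct variables to $1$ (and all others to $*$), so its size is $k$; dually, a minimal 0-certificate has exactly $n-k+1$ variables set to $0$ and therefore has size $n-k+1$. Applying Lemma~\ref{propnonzero} to such a minimal certificate immediately yields $\Gamma^1(f) \geq k$ and $\Gamma^0(f) \geq n-k+1$.

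For the matching upper bounds, I would construct explicit goal functions. Let $c_1(b)$ and $c_0(b)$ denote the number of $1$'s and $0$'s, respectively, in a partial assignment $b$. Define
\[
g_1(b) = \min\{k,\; c_1(b)\} \quad \text{and} \quad g_0(b) = \min\{n-k+1,\; c_0(b)\}.
\]
Note that $g_1(b) = k$ iff $c_1(b) \geq k$, which is exactly the condition that $b$ contains a 1-certificate of $f$, and the value is strictly smaller otherwise; the analogous statement holds for $g_0$. So these are candidate 1- and 0-goal functions with goal values $k$ and $n-k+1$ respectively, provided monotonicity and submodularity hold.

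Monotonicity is immediate: extending $b$ by setting a $*$ to $1$ increases $c_1$ by exactly one (and never decreases $\min\{k, c_1\}$), while setting a $*$ to $0$ leaves $c_1$ unchanged; symmetrically for $g_0$. For submodularity of $g_1$, suppose $b' \succeq b$ with $b_i = b'_i = *$, and fix $\ell \in \{0,1\}$. If $\ell = 0$ both marginals are $0$. If $\ell = 1$, the marginal $g_1(b_{x_i \leftarrow 1}) - g_1(b)$ equals $1$ when $c_1(b) < k$ and $0$ otherwise, and likewise for $b'$. Since $b' \succeq b$ forces $c_1(b') \geq c_1(b)$, whenever the marginal at $b'$ is $1$ the marginal at $b$ is also $1$, which is the submodularity inequality. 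The case of $g_0$ is symmetric.

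The only real obstacle is the submodularity verification, but this is an instance of the standard observation that the composition of the concave nondecreasing function $t \mapsto \min\{k,t\}$ with a modular count yields a submodular function. Once that is in hand the proposition follows immediately.
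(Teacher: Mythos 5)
Your proof is correct and follows essentially the same route as the paper: the lower bounds come from Lemma~\ref{propnonzero} applied to a minimal certificate of the appropriate size, and the upper bounds come from the same $\min\{k, \#\text{ones}\}$ and $\min\{n-k+1, \#\text{zeros}\}$ constructions. The only difference is that you spell out the monotonicity and submodularity verification, which the paper leaves implicit.
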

\begin{proof}
For any $k$-of-$n$ function $f$, define the 1-goal function $g'$ whose value on any subset of bits
is just the minimum of $k$ and the number of bits set to 1.  This 1-goal function has goal value $k$, 
so $\Gamma^1(f) \leq k$
and
by Lemma~\ref{propnonzero}, $\Gamma^1(f) \geq k$. 
Similarly, define the 0-goal function $g''$ whose value on any subset of bits is
the minimum of $n-k+1$ and the number of bits set to 0.
\end{proof}


We can further analyze the goal value for $k$-of-$n$ functions:
\begin{theorem}
\label{thm:thresh}
Let $f:\{0,1\}^n \rightarrow \{0,1\}$ be a Boolean $k$-of-$n$ function.
Then $\Gamma(f) = k(n-k+1)$.
\end{theorem}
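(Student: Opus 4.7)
The plan is to match Proposition~\ref{bipartiteOR}'s upper bound with a lower bound built from symmetrization and a two-stage chain argument. For the upper bound, Propositions~\ref{bipartiteOR} and~\ref{prop:k-of-n} immediately give $\Gamma(f) \leq \Gamma^1(f)\Gamma^0(f) = k(n-k+1)$, so all the work is in showing $\Gamma(f) \geq k(n-k+1)$. Given any goal function $g$ for $f$ with goal value $Q$, I would first reduce to the symmetric case: because $f$ is invariant under coordinate permutations, each $g_\pi(b) := g(\pi\cdot b)$ is also a goal function for $f$ with the same $Q$, so the averaged function $\bar g = \tfrac{1}{n!}\sum_\pi g_\pi$ is rational-valued, monotone, submodular, permutation-invariant, equal to $Q$ on certificates, and at most $Q-1$ on non-certificates (since each $g_\pi(b)\leq Q-1$ there by integrality, and the bound survives averaging). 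Because $\bar g$ depends only on the type $(j,m)$ of $b$ (the number of $1$s and $0$s), I would write $\gamma(j,m) := \bar g(b)$ and normalize $\gamma(0,0)=0$.

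The core of the argument is a two-stage chain calculation on $\gamma$. In the first stage, along the $1$-chain at $m=0$: at type $(k-1,0)$ adding a $1$ completes a $1$-certificate, so the marginal there equals $Q-\gamma(k-1,0)$; submodularity propagates this bound down to all smaller types, giving $\gamma(j+1,0)-\gamma(j,0)\geq Q-\gamma(k-1,0)$ for $j<k-1$. Summing the $k-1$ inequalities telescopes to $Q-\gamma(k-1,0)\leq Q/k$. In the second stage, along the $0$-chain at $j=k-1$: at type $(k-1,n-k)$ adding a $0$ completes a $0$-certificate, and the analogous telescoping yields $Q-\gamma(k-1,n-k)\leq (Q-\gamma(k-1,0))/(n-k+1)$. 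Combining the two bounds gives $Q-\gamma(k-1,n-k)\leq Q/(k(n-k+1))$. Since $(k-1,n-k)$ is not a certificate, $\bar g$ at this type is at most $Q-1$, so the left side is at least $1$, and $Q\geq k(n-k+1)$ follows.

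The main obstacle I anticipate is the symmetrization step. Averaging automatically preserves monotonicity and submodularity, but it is essential that $\bar g$ still strictly separates certificates from non-certificates by a full unit — otherwise the final inequality collapses. This separation rests precisely on the integrality built into the definition of a goal function: each $g_\pi(b)$ at a non-certificate is an integer strictly less than $Q$, hence at most $Q-1$, and averaging preserves this pointwise bound even though the resulting $\bar g$ is merely rational. The two chain steps themselves are mechanical once the function is known to depend only on the type.
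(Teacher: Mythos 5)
Your proof is correct. The upper bound is handled exactly as in the paper (via Proposition~\ref{bipartiteOR}), and your lower bound terminates at the same place the paper's does: a partial assignment with $k-1$ ones and $n-k$ zeros, shown to have value at least $\frac{kl-1}{kl}Q$ with $l=n-k+1$, yet at most $Q-1$ because it is not a certificate, forcing $Q \geq k(n-k+1)$. The difference is in how you reach that point. The paper works with the original goal function $g$ and at each step uses a pigeonhole/averaging argument over marginals ("some index $i_j$ has $g(\{b_{i_1},\ldots,b_{i_j}\}) \geq \frac{j}{k}Q$") to greedily select a chain through the extension graph. You instead symmetrize first, averaging $g$ over all coordinate permutations; this is legitimate precisely because the $k$-of-$n$ function is permutation-invariant, and you correctly note the one delicate point, namely that the unit separation at non-certificates survives averaging only because each $g_\pi$ is integer-valued and hence at most $Q-1$ there. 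Once the function depends only on the type $(j,m)$, your two telescoping sums replace the paper's step-by-step greedy bookkeeping and are cleaner to verify. What symmetrization buys is a mechanical calculation with no existential choices; what the paper's greedy argument buys is independence from the symmetry of $f$, so its template extends to settings where averaging over permutations is unavailable. Both are complete proofs of the theorem.
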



\begin{proof}
Let $g$ be a goal function for $f$ with goal value $Q$.
Consider a minimal 1-certificate $b$ for $f$.  It has exactly $k$ 1's and no 0's.
Since $g(b) = Q$ it follows from the submodularity and monotonicity of $g$ that
there must be at least one index
$i_1$ such that $b_{i_1} = 1$ and
$g(\{b_{i_1}\}) \geq \frac{1}{k} Q$.
Let $t_1 = g(\{b_{i_1}\})$.
Thus  $g(b) - g(\{b_{i_1}\}) = Q-t_1$.  
It follows again from the monotonicity and submodularity of $g$
that for some index $i_2 \neq i_1$ where $b_{i_2} = 1$,
$g(\{b_{i_1}, b_{i_2}\}) \geq t_1 + \frac{1}{k-1}(Q-t_1)$.
Since $t_1 \geq \frac{1}{k}Q$, and
$t_1 + \frac{1}{k-1}(Q-t_1) = t_1(1 - \frac{1}{k-1}) + \frac{1}{k-1}Q$,
we get that
$g(\{b_{i_1}, b_{i_2}\}) \geq \frac{1}{k}(1-\frac{1}{k-1})Q+\frac{1}{k-1}Q$
so
$g(\{b_{i_1}, b_{i_2}\}) \geq Q(\frac{2}{k})$.
Settng $t_2 = g(\{b_{i_1}, b_{i_2}\})$, we similarly get 
$i_3$ with
$g(\{b_{i_1}, b_{i_2}, b_{i_3}\}) \geq \frac{3}{k}Q$ and so forth,
until we have $g(\{b_{i_1}, \ldots, b_{i_{k-1}}\}) \geq \frac{k-1}{k}Q$,
where $b_{i_1}, \ldots, b_{i_{k-1}}$ are
$k-1$ of the $k$ bits  of $b$ that are set to 1.
Let $b_{i_k}$ be the remaining bit of $b$ that is set to 1.
Let $c$ be the minimal 0-certificate for $f$ such that 
$c_{i_k} = 0$, and further,
$c_i = 0$ for all $i$ where $b_i = *$, and $c_i = *$
for all other $i$.
Let $l=n-k+1$, which is the number of 0's in $c$.

Let $t_{k-1} = g(\{b_{i_1}, \ldots, b_{i_{k-1}}\})$, so
$t_{k-1} \geq \frac{k-1}{k}Q$.
Let $d$ be the assignment such that $d_{i_j} = 1$ 
for $j \in \{1,\ldots, k-1\}$, and $d_i = 0$ for all other $i$.
Then $g(d) = Q$ and $g(d) -  g(\{b_{i_1}, \ldots, b_{i_{k-1}}\}) = Q-t_{k-1}$.
Let $Q' = Q-t_{k-1}$.

Let $l=n-k+1$.
It follows from an argument similar to the one above that
there are bits $c_{j_1}, \ldots, c_{j_{l-1}}$ equal to 0
such that $g(\{b_{i_1}, \ldots, b_{i_{k-1}}, c_{j_1}, \ldots, c_{j_{l-1}}\}) \geq t_{k-1} + \frac{l-1}{l}Q'$.
Let $d'$ be the partial assignment such that
$b_{i_1}, \ldots, b_{i_{k-1}}=1$, $c_{j_1}, \ldots, c_{j_{l-1}}=0$
and $b_i = *$ for the remaining one variable $x_i$.

Thus we have $g(d') \geq t_{k-1} + \frac{l-1}{l}Q' = t_{k-1} + \frac{l-1}{l}(Q-t_{k-1})$
and so $g(d') \geq t_{k-1}(1-\frac{l-1}{l}) + \frac{l-1}{l}Q$.
Using the fact that $t_{k-1} \geq \frac{k-1}{k}Q$ we get that
$g(d') \geq \frac{k-1}{k}(1-\frac{l-1}{l})Q + \frac{l-1}{l}Q$ and so
$g(d') \geq \frac{kl-1}{kl} Q$.
Further, since $d'$ is neither a 0-certificate nor a 1-certificate for $f$,
$g(d') < Q$, and thus $g(d') \leq Q-1$.
It follows that $Q \geq kl=k(n-k+1)$. 

Finally, it follows from
Proposition~\ref{bipartiteOR} that
there exists a goal function $g$ for $f$ with goal value exactly equal to $kl = k(n-k+1)$.
\end{proof}

We can obtain goal values of some additional functions 
by applying Proposition~\ref{negate}, 
which states that goal value is unchanged under negation of variables.
Fix $b \in \{0,1\}^n$, and consider the function
$$
f(x) = \begin{cases}  1 & \mbox{if } x =  b  \\
       0 & \mbox{if } x \ne b \end{cases} 
$$
having the unique satisfying assignment $b$.
Since $f$ can be produced from the AND function by
replacement of input variables by their negations, 
the goal value of $f$ is also $n$.

Similarly, for fixed $b \in \{0,1\}^n$, consider the function
$f$ whose value is 1 on input $a \in \{0,1\}^n$ iff $a$ and $b$
agree in at least $k$ bits.  Since this function can be produced
from the $k$-of-$n$ function by replacement of variables by their negations, 
the goal value of this function is $k(n-k+1)$.

\section{Goal Value of Read-Once Functions} 

In this section we prove that a read-once function's goal value
is the product of its DNF and CNF sizes.  That is,
$\Gamma(f) = \ds(f) \cdot \cs(f)$.

Given a set of variables $T$ of a read-once formula $f$,
let $f_{T=0}$ denote the induced read-once formula produced
by setting all the variables in $T$ to 0 (and simplifying
in the standard way to eliminate newly irrelevant variables
and gates that are no longer needed).
Define $f_{T=1}$ analogously.

We will need the following lemma.

\begin{lemma}
Let $f$ be a monotone read-once function computed by a read-once
formula that contains at least two variables.
Then at least one of the following holds:

\begin{enumerate}
\item There is a maxterm $T$ of
$f$, and a partition of $T$ into two sets, $T_1$ and $T_2$, 
such that
$\cs(f_{T_1=0}) = \cs(f_{T_2=0}) = \cs(f)$.

\item
There is a minterm $S$ of
$f$, and a partition of $S$ into two sets, $S_1$ and $S_2$, 
such that
$\ds(f_{S_1=1}) = \ds(f_{S_2=1}) = \ds(f)$.
\end{enumerate}
\end{lemma}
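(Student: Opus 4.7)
I plan to prove this by induction on the number of variables $n$ in the monotone read-once formula for $f$. For the base case $n=2$, the formula is either $x_1\wedge x_2$ (condition 2 holds by partitioning the unique minterm into singletons, since each restriction is a single variable with $\ds=1=\ds(f)$) or $x_1\vee x_2$ (condition 1 holds dually).

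For the inductive step with $n\geq 3$, assume the lemma holds for formulas with fewer variables. Consider the root of the monotone read-once formula for $f$. I will handle the AND case, as the OR case is completely dual. Let the children of the root be $f_1,\ldots,f_k$, $k\geq 2$; then $\cs(f)=\sum_i\cs(f_i)$, $\ds(f)=\prod_i\ds(f_i)$, and each maxterm of $f$ is a maxterm of exactly one child $f_i$ (lifted unchanged).

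If some child $f_j$ with at least two variables satisfies condition 1 by the inductive hypothesis, let $T$ be the witness maxterm of $f_j$ with partition $T_1\cup T_2$. Then $T$ is also a maxterm of $f$ (it forces $f_j=0$, hence $f=0$, and minimality lifts from $f_j$ to $f$), and since $T\subseteq \mathrm{vars}(f_j)$ the restriction $f_{T_l=0}$ only changes the $f_j$ subtree, giving
\[
\cs(f_{T_l=0})=\sum_{i\neq j}\cs(f_i)+\cs(f_j^{T_l=0})=\cs(f),
\]
so condition 1 holds for $f$. Otherwise, every $\geq 2$-variable child satisfies condition 2 by the inductive hypothesis, yielding for each such $f_i$ a minterm $S_i$ and partition $A_i\cup B_i$ that preserves $\ds(f_i)$. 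For each single-variable child $f_i=\{x\}$, set $S_i=\{x\}$ and place $x$ into one of $A_i,B_i$, distributing so that $A:=\bigcup_i A_i$ and $B:=\bigcup_i B_i$ are both nonempty. Then $S:=\bigcup_i S_i$ is a minterm of $f$, and
\[
\ds(f_{A=1})=\prod_i \ds(f_i^{A_i=1})=\prod_i\ds(f_i)=\ds(f),
\]
where the middle equality uses the inductive hypothesis for $\geq 2$-variable children and is trivial ($1=1$) for single-variable children, since $f_i^{A_i=1}$ is either the variable itself or the constant $1$. Symmetrically $\ds(f_{B=1})=\ds(f)$, establishing condition 2 for $f$.

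The main delicate point is the bookkeeping in the second subcase when single-variable children of the AND root appear: because $|S_i|=1$ cannot be split nontrivially, placing $x$ into $A_i$ makes $f_i^{A_i=1}=1$ drop from the conjunction, but this does not break the product since $\ds(1)=1=\ds(f_i)$. Ensuring that $A$ and $B$ are both nonempty is automatic when some child has at least two variables (its nontrivial $A_i,B_i$ provided by the inductive hypothesis take care of it); when every child is a single variable (so $f$ is the AND function on $k\geq 2$ variables), the unique minterm can simply be split into a singleton and its complement.
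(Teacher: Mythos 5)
Your proof is correct and takes essentially the same approach as the paper's: structural induction on the monotone read-once formula, lifting a child's good maxterm partition unchanged through an AND root (using $\cs(f)=\sum_i\cs(f_i)$), and otherwise combining the children's good minterm partitions (using $\ds(f)=\prod_i\ds(f_i)$), with single-variable children absorbed into one side of the split. The only difference is cosmetic: you work with a $k$-ary root where the paper uses a binary tree and treats the single-variable sibling as a separate subcase.
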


\begin{proof}
If Condition 1 holds for $f$, we say it has a good
maxterm partition (with sets $T$, $T_1$, and $T_2$) and
if Condition 2 holds we say that
$f$ has a good minterm partition (with sets $S$, $S_1$, and $S_2$),

Let $n$ be the number of variables of $f$.
For the base case, assume $n=2$.
Then $f$ is either the AND of two variables, or the OR of two variables.
In the former case, $f$ clearly has a good minterm partition,
and in the latter, it has a good maxterm partition.

Suppose the theorem holds when $f$ is computed by a read-once formula with $n$
variables, for some $n \geq 2$.
Let $f$ be a read-once function expresed by a read-once formula with $n+1$ variables.
In an abuse of notation, we will use $f$ to denote both the function
and the formula.

We now show that $f$ either has a good maxterm or a good minterm partition.
Represent $f$ as a binary tree whose internal nodes
are all labeled AND and OR.

\smallskip
{\bf \noindent Case 1:} The root of the tree $f$ is AND.

Since the root of $f$ is AND, we have $\cs(f)=\cs(f')+\cs(f'')$
and $\ds(f)=\ds(f')\cdot\ds(f'')$.

Let $f'$ and $f''$ be the two subtrees of the root.
Since $n+1 > 2$, at least one of these is not a single variable,
so by the inductive hypothesis, it has 
a good minterm partition or a good maxterm partition.
We first consider the subcase where at least one of them has
a good maxterm partition.  Wlog, assume that $f'$ has a good
maxterm partition, with associated sets $T', T'_1$ and $T'_2$.
Then let $T = T'$, $T_1 = T'_1$, and $T_2 = T'_2$.
Since the root of $f$ is AND, $T$ is a maxterm of $f$.
Further, $\cs(f_{T_1=0}) = \cs(f'_{T_1=0}) + \cs(f'') = \cs(f')+\cs(f'') = \cs(f)$.
Similarly,
$\cs(f_{T_2=0}) = \cs(f)$. Therefore, $T$, $T_1$, and $T_2$
form a good maxterm partition for $f$.

Otherwise, at least one of the two formulas, $f'$ and $f''$,
has a good minterm partition.  Without loss of generality,
suppose $f'$ has a good minterm partition,
with associated sets $S'$, $S'_1$, and $S'_2$.
Suppose first that $f''$ is a single variable $y$.
Let $S = S' \cup \{y\}$.
$S_1 = \{y\} \cup S'_1$
and $S_2 = S'_2$.
Since $\{y\}$ is a minterm of $f''$ and $S'$ is a minterm of $f'$,
$S$ is a minterm of $f$.
Further, since $\ds(f)=\ds(f')\cdot\ds(f'')$ and $f''$ is a single variable, 
$\ds(f)=\ds(f')\cdot 1 = \ds(f')$.
Setting $y$ to 1 causes $f$ to become equal to $f'$.
Therefore $\ds(f_{S_1 = 1}) = \ds(f'_{S'_1=1}) = \ds(f')=\ds(f)$.
Also, 
$\ds(f_{S_2=1}) = \ds(f'_{S_2=1})\cdot\ds(f'') = \ds(f') \cdot 1=\ds(f)$.
Thus $S, S_1$ and $S_2$ form a good minterm partition 
for $f$. 

Again assuming that $f'$ has a good minterm partition, now
suppose that $f''$ is not a single
variable. In this subcase, by the inductive hypothesis, $f''$ has 
either a good maxterm partition or a good minterm partition.
If $f''$ has a good maxterm partition, we already showed above
that $f$ also has a good maxterm partition.
So assume that $f''$
has a good minterm partition.
Let 
$S'$, $S'_1$, and $S'_2$ be the sets for the good minterm partition
of $f'$,
and let $S'', S''_1$ and $S''_2$ be the sets for the good
minterm partition of $f''$.
Let $S = S' \cup S''$.
Let $S_1 = S'_1 \cup S''_1$ and
$S_2 = S'_2 \cup S''_2$. 
Then because the root of $f$ is AND,
we have
$\ds(f'_{S_1=1}) = \ds(f'_{S'_1=1})\cdot\ds(f''_{S''_1=1})$
and
$\ds(f'_{S_2=1}) = \ds(f'_{S'_2=1})\cdot\ds(f''_{S''_2=1})$.
Because $S', S'_1$, and $S'_2$
are a good minterm partition for $f'$ and $S''$, $S''_1$, and $S''_2$ 
are a good minterm partition for $f''$,
we have $\ds(f'_{S'_1=1})\cdot\ds(f''_{S''_1=1}) = \ds(f')\cdot\ds(f'')$
and $\ds(f'_{S'_2=1})\cdot\ds(f''_{S''_2=1}) = \ds(f')\cdot\ds(f'')$
It follows that 
$\ds(f_{S_1=1}) = \ds(f')\cdot\ds(f'')=\ds(f)$
and
$\ds(f_{S_2=1}) = \ds(f')\cdot\ds(f'')=\ds(f)$.
Thus $S, S_1$ and $S_2$ form a good minterm partition for $S$.

\smallskip
{\bf \noindent Case 2:} The root of the tree $f$ is OR.

This case follows from Case 1 by duality.
\end{proof}

We are now ready to prove the main theorem of this section.

\begin{theorem}
\label{roftheorem}
If $f$ is a read-once function, then $\Gamma(f)=\ds(f)\cdot \cs(f)$
\end{theorem}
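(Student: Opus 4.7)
Since Proposition~\ref{bipartiteOR} already gives $\Gamma(f)\leq \ds(f)\cs(f)$, only the matching lower bound remains. Both $\Gamma(f)$ and $\ds(f)\cs(f)$ are invariant under negation of input variables, so by Proposition~\ref{negate} I may assume without loss of generality that $f$ is monotone. The plan is to prove $\Gamma(f)\geq \ds(f)\cs(f)$ by induction on the number of variables; the base case of a single variable is trivial, and the inductive step uses the preceding lemma. That lemma supplies either a good maxterm partition $T=T_1\cup T_2$ with $\cs(f_{T_1=0})=\cs(f_{T_2=0})=\cs(f)$, or dually a good minterm partition. The two cases are symmetric under swapping $0\leftrightarrow 1$ and $\ds\leftrightarrow\cs$, so I describe only the first.

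Fix any goal function $g$ for $f$ with goal value $Q$; normalizing so that $g(*,\ldots,*)=0$ only decreases $Q$, so it suffices to bound $Q$ under this convention. Let $\alpha$, $\beta$, $\gamma$ denote the partial assignments that set $T_1$, $T_2$, and all of $T$ to $0$, respectively, starring the remaining variables. Since $T$ is a maxterm, $\gamma$ is a $0$-certificate of $f$, so $g(\gamma)=Q$. I plan to establish three inequalities and combine them. First, the map $b\mapsto g(b\cup\alpha)-g(\alpha)$ on partial assignments over the variables outside $T_1$ is a goal function for the monotone read-once function $f_{T_1=0}$, so by induction together with $\cs(f_{T_1=0})=\cs(f)$,
\[
Q-g(\alpha)\;\geq\;\ds(f_{T_1=0})\,\cs(f),
\]
and symmetrically $Q-g(\beta)\geq \ds(f_{T_2=0})\,\cs(f)$. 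Second, telescoping the submodularity inequalities as the variables of $T_2$ are set to $0$ one at a time, extending $\alpha$ up to $\gamma$ and extending $(*,\ldots,*)$ up to $\beta$, yields $g(\gamma)-g(\alpha)\leq g(\beta)$, i.e.\ $g(\alpha)+g(\beta)\geq Q$. Adding the two inductive inequalities and using this submodularity bound gives
\[
Q\;\geq\;\bigl(\ds(f_{T_1=0})+\ds(f_{T_2=0})\bigr)\,\cs(f).
\]

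The proof then concludes with the combinatorial identity $\ds(f_{T_1=0})+\ds(f_{T_2=0})=\ds(f)$. Since $\ds$ of a monotone function equals its number of minterms, and the minterms of $f_{T_i=0}$ are exactly the minterms of $f$ that avoid $T_i$, this identity will follow once every minterm of $f$ is shown to meet $T$ in exactly one variable. This is the well-known read-once fact that every minterm and every maxterm of a monotone read-once function share exactly one variable, which I plan to verify by a short structural induction on the formula (the cases $f=f_1\wedge f_2$ and $f=f_1\vee f_2$ reduce to the corresponding subformulas, in each of which variable sets are disjoint). Substituting into the displayed bound completes the inductive step.

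The main delicate point is the simultaneous use of the two inductive inequalities with the submodularity bound $g(\alpha)+g(\beta)\geq Q$: individually each of the three is weaker than required, and it is only by summing the first two and absorbing $g(\alpha)+g(\beta)$ back into $Q$ that the correct factor $\cs(f)$ emerges. The argument invokes the read-once hypothesis solely through the single-variable minterm--maxterm intersection property, which is precisely where an analogous bound fails for arbitrary monotone functions.
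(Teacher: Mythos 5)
Your proposal is correct and follows essentially the same route as the paper's proof: induction on the number of variables, the good maxterm/minterm partition lemma, the induced goal functions on $f_{T_1=0}$ and $f_{T_2=0}$ giving $Q-g(\alpha)\geq \ds(f_{T_1=0})\cs(f)$ and $Q-g(\beta)\geq \ds(f_{T_2=0})\cs(f)$, the submodularity bound $g(\alpha)+g(\beta)\geq Q$, and the identities $\cs(f_{T_i=0})=\cs(f)$ and $\ds(f_{T_1=0})+\ds(f_{T_2=0})=\ds(f)$ via the single-variable minterm--maxterm intersection property. Your $g(\alpha),g(\beta)$ are exactly the paper's $g_0(T_1),g_0(T_2)$, and the final summation step is identical.
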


\begin{proof}
Without loss of generality, assume $f$ is monotone.
We will also use $f$ to denote a monotone read-once formula computing $f$.

We prove the theorem
by induction on the number $n$ of variables of the function $f$.
The theorem clearly holds when $n=1$.

Assume the theorem holds when the number of variables is at most $n$.
Now suppose $f$ has $n+1$ variables.

By Proposition~\ref{bipartiteOR},
$\Gamma(f) \leq \ds(f)\cdot\cs(f)$.

We will show that
$\Gamma(f) \geq \ds(f)\cdot\cs(f)$. 

By the previous lemma, $f$ must have either a good maxterm
partition or a good minterm partition.
We give the proof only for the case where it has a good maxterm
partition.  A dual proof works in the case where it has
a good minterm partition.

So suppose $f$ has a good maxterm partition, with
$T$, $T_1$ and $T_2$.
Then

\begin{equation}
\label{cs}
\cs(f)=\cs(f_{T_1=0})=\cs(f_{T_2=0}).
\end{equation}

Since $f$ is a monotone formula, $\ds(f)$ is equal to the number
of minterms of $f$.  
The minterms of 
$f_{T_1=0}$ are precisely those minterms of $f$ that do not contain
any variables of $T_1$.
The corresponding statement holds for $T_2$.

We now exploit a well-known property of read-once formulas that is easy to prove: 
if $a$ is a minimal 0-certificate of a read-once function,
and $b$ is a minimal 1-certificate of the same function,
then $a$ and $b$ contain exactly one variable in common
(cf. ~\cite[Chap. 10]{cramaHammer}). 
Therefore,
each minterm of $f$ contains exactly one variable of the maxterm
$T$ in question. 
It follows that each minterm of $f$ 
either contains a single variable
of $T_1$ or a single variable of $T_2$.
Therefore,

\begin{equation}
\label{ds}
\ds(f) = \ds(f_{T_1=0})+\ds(f_{T_2=0}).
\end{equation}

Let $g$ be a goal function of $f$ whose goal value is $\Gamma(f)$.
Define $g_0$ to be the function of subsets $X$ of the variables of $f$
such that $g_0(X)$ is equal to the value of $g$ on the assignment
setting the variables in $X$ to 0.

Consider the function $f_{T_1=0}$.  Since it is read-once,
and has fewer than $n+1$ variables, by induction we have
$\Gamma(f_{T_1=0})=\ds(f_{T_1=0})\cdot\cs(f_{T_1=0})$.
If we take $f$ and set the variables in $T_1$ to 0, the induced
function is $f_{T_1=0}$.
It follows that we can define a goal function $h$ for $f_{T_1=0}$
whose value on any assignment $b$ to the variables of $f_{T_1=0}$
is equal to $g(b') - g_0(T_1)$, where
$b'$ is the extension of $b$ to the variables of $f$ that is
produced by setting the variables of $T_1$ to 0.

The goal value of $h$ is $\Gamma(f) - g_0(T_1)$.
By the definition of $\Gamma$, the goal value of $h$
must be at least 
$\Gamma(f_{T_1=0})$.  Since 
$\Gamma(f_{T_1=0}) =\ds(f_{T_1=0})\cdot\cs(f_{T_1=0})$,

\begin{equation}
\label{t1}
g_0(T_1) + \ds(f_{T_1=0})\cdot\cs(f_{T_1=0}) \leq \Gamma(f).
\end{equation}

The same argument with $T_2$ shows that
\begin{equation}
\label{t2}
g_0(T_2) + \ds(f_{T_2=0})\cdot\cs(f_{T_2=0}) \leq \Gamma(f).
\end{equation}

Since $T_1 \cup T_2$ is a maxterm of $f$, and $g$
is a goal function for $f$ with goal value $\Gamma(f)$, we have
$g_0(T_1 \cup T_2) = \Gamma(f)$.

By the submodularity of $g$,
\begin{equation}
\label{rightsand}
g_0(T_1)+ g_0(T_2) \geq \Gamma(f).
\end{equation}

Adding together (\ref{t1}) and (\ref{t2}), we get

\begin{equation}
g_0(T_1) + g_0(T_2) + \ds(f_{T_1=0})\cdot\cs(f_{T_1=0}) + \ds(f_{T_2=0})\cdot\cs(f_{T_2=0}) \leq 2\Gamma(f)
\end{equation}

Applying (\ref{rightsand}) gives 

\begin{equation}
\Gamma(f) + \ds(f_{T_1=0})\cdot\cs(f_{T_1=0}) + \ds(f_{T_2=0})\cdot\cs(f_{T_2=0}) \leq 2\Gamma(f)
\end{equation}

Using Equations (\ref{cs}) and (\ref{ds}), and subtracting $\Gamma(f)$ from both sides
of the inequality yields

\begin{equation}
\ds(f)\cdot\cs(f) \leq \Gamma(f).
\end{equation}
\end{proof}

We also have tight bounds on the 0-goal and 1-goal values
of read-once functions.

\begin{corollary}
\label{cor:rof}
Let $f$ be a read-once function.
Then $\Gamma^0(f) = \ds(f)$ and $\Gamma^1(f) = \cs(f)$.
\end{corollary}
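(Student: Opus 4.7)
The plan is to derive the corollary as an almost immediate consequence of Theorem~\ref{roftheorem}, combined with the two inequalities from Proposition~\ref{bipartiteOR}: namely $\Gamma(f) \le \Gamma^1(f)\cdot\Gamma^0(f)$, $\Gamma^0(f) \le \ds(f)$, and $\Gamma^1(f) \le \cs(f)$. The upper bounds $\Gamma^0(f) \le \ds(f)$ and $\Gamma^1(f) \le \cs(f)$ are handed to us directly by Proposition~\ref{bipartiteOR}, so the only work is to argue the matching lower bounds.

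For the lower bounds, the key observation is a pinching argument. By Theorem~\ref{roftheorem} we have $\Gamma(f) = \ds(f)\cdot \cs(f)$, and by Proposition~\ref{bipartiteOR} we have $\Gamma(f) \le \Gamma^1(f)\cdot\Gamma^0(f)$. Chaining these gives
\[
\ds(f)\cdot\cs(f) \;=\; \Gamma(f) \;\le\; \Gamma^1(f)\cdot \Gamma^0(f) \;\le\; \cs(f)\cdot \ds(f),
\]
where the final inequality again uses Proposition~\ref{bipartiteOR}. Hence equality holds throughout, and in particular $\Gamma^1(f)\cdot \Gamma^0(f) = \cs(f)\cdot\ds(f)$. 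Since each factor on the left is bounded above by the corresponding factor on the right (and all quantities are positive integers), this forces $\Gamma^1(f) = \cs(f)$ and $\Gamma^0(f) = \ds(f)$.

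There is no real obstacle here beyond recognizing that Theorem~\ref{roftheorem} together with the product inequality from Proposition~\ref{bipartiteOR} forces both factors to be tight simultaneously; the entire proof is a two-line squeeze. The only thing to double-check is that the argument does not tacitly require $\ds(f), \cs(f) > 0$, which is fine since we may assume $f$ is non-constant (the constant case being trivial, as goal values vanish).
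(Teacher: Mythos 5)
Your proposal is correct and follows essentially the same route as the paper: both derive the result by combining Theorem~\ref{roftheorem} with the three inequalities of Proposition~\ref{bipartiteOR}, the only difference being that the paper phrases the squeeze as a proof by contradiction while you state it directly. The positivity caveat you flag is handled the same way in both arguments.
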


\begin{proof}
By Proposition~\ref{bipartiteOR},
$\Gamma^0(f) \leq \ds(f)$, $\Gamma^1(f) \leq cs(f)$, and
$\Gamma(f) \leq \Gamma^0(f) \cdot \Gamma^1(f)$.
Suppose $\Gamma^0(f) < ds(f)$ or $\Gamma^1(f) < cs(f)$. Then
$\Gamma(f) < ds(f) \cdot cs(f)$, contradicting
the result of Theorem~\ref{roftheorem} that says $\Gamma(f) = ds(f) \cdot cs(f)$,
So
$\Gamma^0(f) = \ds(f)$ and $\Gamma^1(f) = cs(f)$.
\end{proof}

\section{Upper Bounds on Goal Value}

We first prove a general upper bound on the goal value of Boolean functions.

\begin{theorem}
For any Boolean function $f$,
$\Gamma(f) \leq 2^n-1$.
\end{theorem}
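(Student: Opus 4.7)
The plan is to exhibit an explicit goal function $g$ for $f$ whose goal value is $2^n-1$. For $b \in \{0,1,*\}^n$, let $N(b) = 2^k$ where $k$ is the number of $*$'s in $b$; equivalently, $N(b)$ is the number of full assignments $a \in \{0,1\}^n$ with $a \succeq b$. Define
\[ g(b) = \begin{cases} 2^n - 1 & \text{if $b$ contains a certificate of $f$,} \\ 2^n - N(b) & \text{otherwise.} \end{cases} \]

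First I would verify the range condition: $g(b) = 2^n - 1$ iff $b$ contains a certificate. The ``if'' direction is by definition. Conversely, if $b$ contains no certificate then $b$ is not a full assignment, so $N(b) \geq 2$ and $g(b) \leq 2^n - 2$. Since every $a \in \{0,1\}^n$ is a certificate of $f$, $g(a) = 2^n - 1$, so $g$ has goal value $2^n - 1$. Monotonicity is a short case split on whether $b$ and $b' \succeq b$ are certificates, using that extensions of certificates are certificates and that $N$ is nonincreasing under extensions.

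The main work is submodularity: for $b' \succeq b$ with $b_i = b'_i = *$ and $l \in \{0,1\}$, the marginal $g(b_{x_i \leftarrow l}) - g(b)$ must dominate $g(b'_{x_i \leftarrow l}) - g(b')$. I would split into cases based on the certificate status of each of the four partial assignments involved. If $b$ is a certificate then so are all four and both marginals vanish. If $b$ is not a certificate but $b'$ is, the right marginal is $0$ and the left is nonnegative by monotonicity. Otherwise neither $b$ nor $b'$ is a certificate, and I refine by the status of $b_{x_i \leftarrow l}$ and $b'_{x_i \leftarrow l}$; one of the four sub-subcases is immediately impossible because $b'_{x_i \leftarrow l} \succeq b_{x_i \leftarrow l}$ forces $b'_{x_i \leftarrow l}$ to be a certificate whenever $b_{x_i \leftarrow l}$ is.

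The delicate sub-subcase, which is the place I expect any real resistance, is when $b'_{x_i \leftarrow l}$ is a certificate but $b_{x_i \leftarrow l}$ is not. Then the marginals evaluate to $N(b)/2$ and $N(b') - 1$ respectively, so I need $N(b)/2 \geq N(b') - 1$. This sub-subcase forces $b \neq b'$ (otherwise $b_{x_i \leftarrow l}$ and $b'_{x_i \leftarrow l}$ would coincide), so $b'$ has strictly more fixed coordinates than $b$; hence $N(b) \geq 2 N(b')$ and therefore $N(b)/2 \geq N(b') > N(b') - 1$. In the remaining sub-subcases the two marginals have matching formulas, either both $N(\cdot)/2$ or both $N(\cdot) - 1$, and the inequality reduces to $N(b) \geq N(b')$, which is immediate from $b' \succeq b$. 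Combining these checks shows that $g$ is a valid goal function for $f$, giving $\Gamma(f) \leq 2^n - 1$.
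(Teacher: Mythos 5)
Your construction is identical to the paper's: the paper defines $g(b)=\sum_{i=1}^{\#_{01}(b)} 2^{n-i}=2^n-2^{(\text{number of }*\text{'s in }b)}=2^n-N(b)$ on non-certificates and $2^n-1$ on certificates, exactly your function. Your proof is correct and simply fills in the monotonicity and submodularity case analysis (including the one genuinely delicate sub-subcase, which you handle correctly) that the paper dismisses as ``straightforward.''
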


\begin{proof}
For $b \in \{0,1,*\}^n$, let $\#_{01}(b)$ be the number of 0 and 1 in $b$.

Let $g:\{0,1,*\}^n \rightarrow \mathbb{Z}_{\geq 0}$ be defined by $g(b)=2^n-1$ iff $b$ contains a certificate of $f$ and $g(b)=\sum\limits_{i=1}^{\#_{01}(b)} 2^{n-i}$ otherwise. This function is clearly monotone. The proof of the submodularity is also straightforward.

\end{proof}

We do not know how close the goal value
of a Boolean function can come to this upper bound
of $2^n -1$.
Deshpande et al.\ showed that the function $f(x_1, \ldots, x_n) = x_1 x_2 \vee x_3 x_4 \vee \ldots \vee x_{n-1} x_n$
has goal value at least $2^{n/2}$~\cite{deshpandeHellersteinKletenik}.
By Theorem~\ref{roftheorem}, its goal value is in fact $n/2*2^{n/2}$.
Theorem~\ref{roftheorem} also implies that another read-once function has
higher goal value.

\begin{proposition}
\label{biggerQ}
Let $n$ be a multiple of 3.
The function $f(x_1, \ldots, x_n) = x_1 x_2 x_3 \vee x_4 x_5 x_6 \vee \ldots \vee x_{n-2}x_{n-1}x_n$
has $\Gamma(f) = 3^{n/3}(n/3)$.
\end{proposition}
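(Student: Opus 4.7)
The plan is to recognize $f$ as a read-once function and apply Theorem~\ref{roftheorem}, which says that for any read-once function, $\Gamma(f) = \ds(f) \cdot \cs(f)$. The function $f$ is a monotone read-once formula: an OR of $n/3$ disjoint AND-gates, each taking three distinct variables, so no variable appears twice. Thus the identity $\Gamma(f) = \ds(f)\cdot\cs(f)$ applies directly, and the task reduces to computing the two traditional complexity measures and multiplying them.

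To compute $\ds(f)$, I would use the fact that for a monotone function, $\ds(f)$ equals the number of minterms (each minimal 1-certificate contributes exactly one irredundant prime implicant). Since $f$ evaluates to 1 iff at least one of the triples $(x_{3i-2},x_{3i-1},x_{3i})$ is all-1, the minterms of $f$ are exactly the $n/3$ triples $\{x_{3i-2},x_{3i-1},x_{3i}\}$. Hence $\ds(f) = n/3$.

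To compute $\cs(f)$, I would use duality: for a monotone function, $\cs(f)$ equals the number of maxterms. A maxterm is a minimal set whose zeroing forces $f$ to 0, i.e., a minimal hitting set for the collection of minterms. Since the $n/3$ minterms are pairwise disjoint, a minimal hitting set must contain exactly one variable from each triple, giving $3^{n/3}$ choices. Therefore $\cs(f) = 3^{n/3}$.

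Combining these via Theorem~\ref{roftheorem} yields $\Gamma(f) = (n/3) \cdot 3^{n/3}$, as claimed. There is no serious obstacle here; the only step requiring mild care is justifying the minterm/maxterm counts (in particular, confirming that distinct hitting-set selections yield distinct irredundant clauses of the CNF), but this follows immediately from the pairwise disjointness of the variable triples.
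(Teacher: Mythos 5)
Your proposal is correct and matches the paper's (implicit) argument exactly: the paper also derives this proposition directly from Theorem~\ref{roftheorem} applied to the read-once formula, with $\ds(f) = n/3$ and $\cs(f) = 3^{n/3}$. Your justification of the minterm and maxterm counts via the pairwise disjointness of the triples is sound and fills in the routine details the paper leaves unstated.
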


We note that Deshpande et al.\ also showed a lower bound of $2^{n/2}$ on the goal
value of Boolean function $f(x_1, \ldots, x_n)$, where $n$ is even,
such that the value of $f$ is 1 iff the binary number represented by the first $n/2$
input bits is strictly less than the binary number represented by the last
$n/2$ bits.  For $n=4$, the function can be represented by the following
DNF formula:  $f(x_1, x_2, x_3, x_4) = \neg x_1 \wedge \neg x_1 \neg x_2 x_4 \vee \neg x_2 x_3 x_4$.
It is easy to see that $(0, 0, *, 1)$ is a minimal 1-certificate
of $f$, and $(1, 1, *, *)$ is a minimal 0-certifcate.
Therefore, $f$ has a minterm $S$ and a maxterm $T$ where $|S \cap T| > 1$.
This violates the well-known property of read-once functions we used in the proof of
Theorem~\ref{roftheorem}, and so $f$ is a not read-once function.  
We do not know how to compute the exact
goal value of this function, for arbitrary $n$.  We leave as an open question whether its
goal value is, in fact, higher than the goal value of the function in Proposition~\ref{biggerQ}.

The fundamental facts from Section~\ref{fun} are upper bounds:
$\Gamma(f) \leq \Gamma^1(f) \cdot \Gamma^0(f)$, $\Gamma^0(f) \leq \ds(f)$, and 
$\Gamma^1(f) \leq \cs(f)$.
These are tight for certain functions.
As shown in Proposition~\ref{andorxor}, for the AND function,
$\Gamma(f) = n$, and it is easy to show that
$\Gamma^0(f)=1=\ds(f)$ while $\Gamma^1(f) = n=\cs(f)$.
Therefore, for the AND (or the OR) function,
we have $\Gamma(f) = \Gamma^1(f) \cdot \Gamma^0(f)$ $= \cs(f)\cdot\ds(f)$.
In contrast, by Lemma~\ref{propnonzero} 
and Proposition~\ref{andorxor}, 
when $f$ is the XOR function, $\Gamma(f) = \Gamma^1(f) = \Gamma^0(f)=n$,
while $\ds(f) = \cs(f) = 2^{n-1}$.

We have the following additional upper bound.

\begin{theorem}
\label{lemk}
Let $f$ be a Boolean function that is not identically 1 or 0.
Then for $k \in \{0,1\}$,
$\Gamma^k(f) \leq \Gamma(f)$.
\end{theorem}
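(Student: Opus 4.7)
The plan is to turn an optimal goal function for $f$ into a 1-goal function of no larger value; the 0-goal case is symmetric. Let $g$ be a goal function for $f$ with goal value $Q=\Gamma(f)$, and define
\[
g'(b) \;=\; g(b) \;-\; \alpha(b),
\]
where $\alpha(b)=1$ if $b$ is a 0-certificate of $f$ and $\alpha(b)=0$ otherwise. Because $f$ is not identically constant, the all-$*$ assignment is not a certificate, forcing $Q\ge 1$, which keeps $g'(b)\ge 0$ everywhere.

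First I would verify that $g'$ is a 1-goal function with 1-goal value $Q$. If $b$ is a 1-certificate then $g(b)=Q$ and $\alpha(b)=0$, giving $g'(b)=Q$. If $b$ is a 0-certificate but not a 1-certificate, then $g(b)=Q$ and $\alpha(b)=1$, so $g'(b)=Q-1<Q$. If $b$ contains no certificate then $g(b)<Q$ and $\alpha(b)=0$, so $g'(b)<Q$. For monotonicity, the only non-immediate case is when extending $b$ by one variable converts it from a non-0-certificate into a 0-certificate; there $g$ jumps from a value $<Q$ up to $Q$, an increase of at least $1$, which absorbs the $-1$ contribution from $\alpha$.

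The main step, and the one I expect to be the main obstacle, is submodularity. Writing $g'=g-\alpha$, the submodular difference for $g'$ at $(b,b',x_i,l)$ equals the one for $g$ minus the corresponding quantity for $\alpha$. Because $\alpha$ is monotone and $\{0,1\}$-valued, the $\alpha$-contribution lies in $\{-1,0,1\}$. The only way it equals $+1$ is if $b$ is not a 0-certificate, $b_{x_i\leftarrow l}$ is, and $b'$ is already a 0-certificate (which forces $b'_{x_i\leftarrow l}$ to be one as well). In that scenario $b$ cannot be a 1-certificate either, since $b'\succeq b$ would then be both a 0- and a 1-certificate. Hence $b$ contains no certificate at all, so $g(b)\le Q-1$, and the $g$-part of the submodular difference is
\[
\bigl(g(b_{x_i\leftarrow l})-g(b)\bigr)-\bigl(g(b'_{x_i\leftarrow l})-g(b')\bigr)=Q-g(b)-Q+Q=Q-g(b)\ge 1,
\]
which offsets the $+1$ from $\alpha$ and preserves submodularity.

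Since $g'$ is a 1-goal function for $f$ with 1-goal value $Q$, we conclude $\Gamma^1(f)\le \Gamma(f)$. The symmetric construction with the 1-certificate indicator in place of $\alpha$ yields $\Gamma^0(f)\le \Gamma(f)$ by the same argument.
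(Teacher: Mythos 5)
Your construction is exactly the paper's: since $g(b)=Q$ on every partial assignment containing a $0$-certificate, your $g'=g-\alpha$ coincides with the paper's $g^1$ (which is set to $Q-1$ on $0$-certificates and to $g$ elsewhere), and your case analysis correctly supplies the monotonicity and submodularity checks that the paper labels ``straightforward.'' The only point worth tightening is in the monotonicity step, where the claim that $g(b)<Q$ needs the same observation you make later for submodularity --- namely that $b$ cannot be a $1$-certificate when its one-variable extension is a $0$-certificate --- so the proof is correct and essentially identical to the paper's.
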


\begin{proof}
Let $g:\{0,1,*\}^n \rightarrow \mathbb{Z}_{\geq 0}$ be a goal function for $f$,
and let $Q$ be its goal value.
Without loss of generality, assume $k=1$.
Let $g^1:\{0,1,*\}^n \rightarrow \mathbb{Z}_{\geq 0}$ 
be such that 
for $b \in \{0,1,*\}^n$, 
$g^1(b)=Q-1$ if $b$ contains a $0$-certificate of $f$,
and $g^1(b) = g(b)$ otherwise.
Clearly $g^1(b) = Q$ iff $b$ contains a 1-certificate of $f$.
Using the monotonicity and submodularity of $g$,
and the fact that any extension of a 0-certificate is also a 0-certificate,
it is straightforward to show that $g^1$ is monotone and submodular.
\end{proof}

\section{Goal Value and Other Measures of Boolean Function Complexity}

Goal value does not appear to be equivalent or
nearly equivalent to other previously studied measures
of Boolean function complexity.
We note first that
many such measures (e.g., sensitivity, polynomial threshold degree,
certificate size) have a maximum value of $n$, while goal value can
be exponential in $n$.  Of course, this does not preclude the possibility
that one such value might be closely related to the logarithm of the goal value.

Paterson defined the notation of a {\em formal} complexity measure
for Boolean functions (see ~\cite{Wegener}).
A quantity $m(f)$ is a formal complexity measure
for Boolean functions $f$
if it satisfies
$m(f \vee g) \leq m(f) + m(g)$ for all pairs of Boolean functions $f$ and $g$
defined on a common set of variables. 
Goal value is {\em not} a formal complexity measure.
For example, consider $f(x_1, x_2, x_3, x_4) = x_1 x_2$
and $g(x_1, x_2, x_3, x_4) = x_3 x_4$.
By our results on read-once functions,
$\Gamma(f) = \Gamma(g) = 2$ while
$\gamma(f \vee g) = \ds(f \vee g) \cdot \cs(f \vee g) = 8$. 

We note that Razborov defined complexity
measure $m(f)$ to be {\em submodular}
if it satisfies
$m(f \wedge g) + m(f \vee g) \leq m(f) + m(g)$~\cite{razborov}. 
He also showed that this measure can never be more than $n$,
so clearly goal value is not a submodular 
complexity measure in this sense.

Above, we have shown relationships between goal value, $\ds(f)$, and $\cs(f)$. 
We now relate 0-goal value 
and 1-goal value of a {\em monotone} Boolean function to decision tree rank and 
polynomial threshold degree.

A decision tree 
can be used to compute a monotone submodular set function
$h:2^V \rightarrow  \{0,1,\ldots, d\}$ as
follows: It 
has internal nodes labeled with variables $x_i \in V$.
Each leaf of the tree is labeled with an element of $\{0,1\ldots, d\}$.
In using the tree to compute the value of $h$ on input $S \subseteq V$,
for each internal node labeled $x_i$,
you branch right if $x_i$
is in $S$, and left if not.

The following lemma will be useful.
 
\begin{lemma} 
\label{novalue}
Let $f$ be a monotone Boolean function. 
Then there is a 1-goal function $g$ for $f$ 
that gives no value to 0's, whose goal value is $\Gamma^1(g)$.
Similarly, there is a 0-goal function $g$ for $f$
that gives no value to 1's, whose goal value is $\Gamma^0(g)$.
\end{lemma}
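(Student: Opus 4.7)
The plan is to start with any optimal 1-goal function $g'$ for $f$ and ``project away'' the $0$'s. For $b \in \{0,1,*\}^n$, let $b^{\uparrow}$ denote the partial assignment obtained from $b$ by replacing every $0$ entry with $*$, so that $b^{\uparrow}$ retains only the positions set to $1$. Define $g(b) := g'(b^{\uparrow})$. By construction $g$ gives no value to $0$'s: if $b'$ is produced from $b$ by changing a single $*$ to a $0$, then $b^{\uparrow} = (b')^{\uparrow}$ and hence $g(b) = g(b')$.

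First I would verify that $g$ is a $1$-goal function for $f$ with the same goal value as $g'$. Since $f$ is monotone, a partial assignment is a $1$-certificate of $f$ iff the set of its $1$-entries already forces $f$ to $1$; equivalently, $b$ contains a $1$-certificate of $f$ iff $b^{\uparrow}$ does. Combined with the fact that $g'$ is a $1$-goal function of value $Q = \Gamma^1(f)$, this yields $g(b) = Q$ iff $b$ contains a $1$-certificate of $f$, and $g(b) < Q$ otherwise.

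Next I would verify monotonicity and submodularity of $g$. Setting a $*$ to $0$ leaves $b^{\uparrow}$ unchanged, so $g$ is unaffected; setting a $*$ to $1$ extends $b^{\uparrow}$ by the same $*\!\to\!1$ move, so monotonicity of $g'$ gives a nonnegative increment, hence $g$ is monotone. For submodularity, fix $b' \succeq b$ and $b_i = b'_i = *$. The key observation is that $(b')^{\uparrow} \succeq b^{\uparrow}$, since any coordinate where $b^{\uparrow}$ equals $1$ has $b_j = 1$, forcing $b'_j = 1$ and hence $(b')^{\uparrow}_j = 1$; moreover $b^{\uparrow}_i = (b')^{\uparrow}_i = *$. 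For $\ell = 1$, the $g$-marginal at position $i$ above $b$ equals the $g'$-marginal at position $i$ above $b^{\uparrow}$, and similarly for $b'$, so the required inequality transfers verbatim from submodularity of $g'$. For $\ell = 0$ both marginals vanish, so the inequality is trivial.

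Taking $g'$ to be optimal yields a $1$-goal function $g$ with goal value equal to $\Gamma^1(f)$. The $0$-goal statement follows by the dual construction: let $b^{\downarrow}$ replace each $1$ in $b$ by $*$ and set $g(b) = g'(b^{\downarrow})$ for an optimal $0$-goal function $g'$, using the monotone-$f$ fact that $0$-certificates are witnessed entirely by the variables set to $0$. I do not expect a genuine obstacle; the only point that needs care is the compatibility of the projection $b \mapsto b^{\uparrow}$ with the extension order $\succeq$ and with marginal increments at $*$-coordinates, which is exactly what makes monotonicity and submodularity pass cleanly through the projection.
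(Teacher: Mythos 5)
Your proposal is correct and follows essentially the same route as the paper: the paper's proof also replaces all $0$'s in $b$ by $*$'s and defines the new utility function as the old one evaluated on that projection, leaving the verification of monotonicity, submodularity, and the $1$-goal property as "straightforward." You have simply supplied the details the paper omits, and they check out.
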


\begin{proof}
We prove this for the 1-goal function; the proof is symmetric for a 0-goal function.
Let $g$ be a 1-goal utility function for $f$ with 1-goal value $\Gamma^1(f)$.
For $b \in \{0,1,*\}$,
let $b'$ be the partial assignment produced from $b$
by changing all 0's in $b$ to $*$'s.
Define a related utility function $h$ such that $h(b) = g(b')$.
It is straightforward to verify that because $f$ is monotone, $h$ is 
also a 1-goal utility function for $f$, with the
same 1-goal value as $g$, namely $\Gamma^1(f)$.
\end{proof}

We now relate monotone submodular set functions and decision trees.

\begin{lemma}
\label{monotonesubmodrank}
If $h:2^V \rightarrow  \{0,1,\ldots, d\}$ 
is a monotone, submodular set function, then there is a decision tree computing $h$ that has rank
$d$. 
\end{lemma}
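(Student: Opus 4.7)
The plan is to proceed by induction on $|V|$. The base case $|V|=0$ is immediate: the single-leaf tree labeled $h(\emptyset)$ computes $h$ and has rank $0 \leq d$. For the inductive step with $|V| \geq 1$, the case where $h$ is constant is again handled by a single leaf, so I may assume $h$ is non-constant.

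My first move is to select a variable $x_i$ with $h(\{x_i\}) > h(\emptyset)$. Such a variable must exist: if every singleton satisfied $h(\{x_j\}) = h(\emptyset)$, then by submodularity every marginal $h(S \cup \{x_j\}) - h(S)$ would be $\leq h(\{x_j\}) - h(\emptyset) = 0$, which combined with monotonicity would force $h$ to be constant. I then branch on this $x_i$ at the root and consider the two induced functions $h^i_0, h^i_1 : 2^{V \setminus \{x_i\}} \to \mathbb{Z}_{\geq 0}$ defined by $h^i_0(S) = h(S)$ and $h^i_1(S) = h(S \cup \{x_i\})$. Both inherit monotonicity and submodularity from $h$.

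Now I apply the inductive hypothesis twice. The function $h^i_0$ has range contained in $\{0,\ldots,d\}$, so it admits a tree $T_0$ of rank at most $d$. For $h^i_1$, the choice of $x_i$ guarantees $h^i_1(S) \geq h(\{x_i\}) \geq h(\emptyset)+1$, so after subtracting the constant $h(\emptyset)+1$ I obtain a monotone submodular function whose range lies in $\{0,\ldots,d-1\}$; by the inductive hypothesis this shifted function has a tree of rank at most $d-1$, and adding the constant back at each leaf yields a tree $T_1$ for $h^i_1$ of the same rank. (Shifting by a constant preserves both monotonicity and submodularity, which is the routine fact that makes this legal.)

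To finish, I combine $T_0$ and $T_1$ under a root labeled $x_i$, with the left subtree corresponding to $x_i \notin S$ and the right to $x_i \in S$. If $\mathtt{rank}(T_0) \neq \mathtt{rank}(T_1)$, the combined rank is $\max(\mathtt{rank}(T_0), \mathtt{rank}(T_1)) \leq d$; if they are equal, then both are at most $d-1$ (since $\mathtt{rank}(T_1) \leq d-1$) and the combined rank is the common value plus $1$, which is still at most $d$. The main technical step, and essentially the only place where submodularity is used beyond monotonicity, is the selection of $x_i$ so that committing $x_i \in S$ strictly raises the minimum of the right-hand subfunction and thereby shaves one unit off the rank budget on that branch.
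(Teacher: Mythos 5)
Your proposal is correct and follows essentially the same route as the paper's proof: branch at the root on a variable $x_i$ with $h(\{x_i\}) > h(\emptyset)$, recurse on the restriction (rank $\le d$) and on the shifted right-hand subfunction whose range drops to $\{0,\ldots,d-1\}$ (rank $\le d-1$), and combine. The only differences are cosmetic — you induct on $|V|$ alone with $d$ universally quantified rather than on $n+d$, and you justify the existence of $x_i$ directly from submodularity of singleton marginals rather than via a set maximizing $h$.
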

\begin{proof} 

It was shown in ~\cite{feldmanRepresentation} 
that for submodular $h$ (not necessarily monotone),
there is a decision tree computing $h$ of rank at most $2d$.
Here we give a better bound on rank when monotonicity holds.

The proof is by induction on $n$ and $d$.
It is clearly true if $n=0$ and $d=0$, and in particular,
it is true for $n+d=0$.
Let $s > 0$, and assume true for $n+d < s$.
We now show true for $n+d = s$.

If $h$ is identically equal to a single value in $\{0,1,\ldots,d\}$, then
the decision tree can consist of a single node labeled with that value.

Otherwise, there exists a subset $S \subseteq V$ such that $h(S) \neq h(\emptyset)$.
By monotonicity, $h(S) > h(\emptyset)$.
Let $S$ be such that $h(S)$ is maximized.  If $h(S) < d$, then a decision tree of rank at most $d$
must exist by induction.  Suppose $h(S) = d$.

By submodularity and monotonicity,
there exits $x_i \in S$ such that $h(\{x_i\}) > h(\emptyset)$, so $h(\{x_i\}) \geq 1$.
We construct a decision tree for $h$ by first putting $x_i$ in the root.  
The left subtree of this tree needs to compute the function
$h':2^{V - \{x_i\}} \rightarrow \{0,1, \ldots, d\}$
such that $h'(S) = h(S)$
for all $S \subseteq V-\{x_i\}$.
This is a function of subsets of $n-1$ variables, and function $h'$ is monotone and submodular, and thus by induction, can be computed
by a tree of rank at most $d$.

The right subtree
needs to compute the function $h'':2^{V - \{x_i\}} \rightarrow \{0,1, \ldots, d\}$
such that $h''(S) = h(S \cup \{x_i\})$ for all $S \subseteq V$. 

We now show that there is a decision tree
computing $h''$ that has rank at most $d-1$.
Consider the function $\hat{h}$ defined on $2^{V - \{x_i\}}$
such that for $S \subseteq V$, $\hat{h}(S) = h(S \cup \{x_i\}) - h(\{x_i\})$ for all $x_i$ in $S$.
Since $h$ is monotone, so is $\hat{h}$.
Further, since $h(\{x_i\}) \geq 1$ and $h(S \cup \{x_i\}) \leq d$,
it follows that $\hat{h}$ can be viewed as mapping to $\{0,1, \ldots, d-1\}$.
Function $\hat{h}$ is defined on $n-1$ variables, and since it is monotone and submodular, 
by induction, it can be computed by a tree of rank at most $d-1$.
If modify the
tree by adding the value $h(\{x_i\})$ to the value in each leaf of the tree, the resulting tree will compute $h''$.

We have thus shown that we can construct a tree computing $h$ such that
the two subtrees of the root have rank $d$ and $d-1$ respectively.
Such a tree has rank $d$.
\end{proof}

Using the above, we relate decision tree rank and goal values.

\begin{theorem}
\label{rankdprop}
Let $f$ be a monotone Boolean function, and let $d = \min\{\Gamma^1(f),\Gamma^0(f)\}$.
Then there is a decision tree of rank at most $d$ computing $f$.
There is a
polynomial-threshold function of degree at most $d$ computing $f$.
\end{theorem}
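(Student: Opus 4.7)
The plan is to reduce the rank statement to Lemma~\ref{novalue} and Lemma~\ref{monotonesubmodrank}, and then obtain the polynomial-threshold degree bound via the standard decision tree $\to$ decision list $\to$ PTF chain.

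Without loss of generality, assume $d = \Gamma^1(f)$; the case $d = \Gamma^0(f)$ is symmetric, using a 0-goal function that gives no value to 1's. By Lemma~\ref{novalue}, there is a 1-goal function $g$ for $f$ with 1-goal value $d$ that gives no value to 0's. Define the set function $h:2^V \to \{0,1,\ldots,d\}$ by $h(S) = g(b_S)$, where $b_S$ is the partial assignment with $(b_S)_i = 1$ for $i \in S$ and $(b_S)_i = *$ otherwise. The monotonicity and submodularity of $g$ translate directly into set-function monotonicity and submodularity of $h$, because moving from $b_S$ to $b_{S \cup \{j\}}$ is exactly the operation $b \mapsto b_{x_j \leftarrow 1}$ on which those definitions are phrased. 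Moreover, since $f$ is monotone, $b_S$ contains a 1-certificate of $f$ iff $S$ contains a minterm iff $f(x_S) = 1$, where $x_S \in \{0,1\}^n$ is the characteristic vector of $S$; hence $h(S) = d$ iff $f(x_S) = 1$.

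By Lemma~\ref{monotonesubmodrank}, there is a decision tree $T$ of rank at most $d$ computing $h$. Evaluating $T$ on an input $x \in \{0,1\}^n$ by branching right at an $x_i$-node when $x_i = 1$ (so $i \in S_x$) and left when $x_i = 0$ matches the paper's Boolean-decision-tree convention. Relabel each leaf of $T$ with the value $1$ if its original integer label is $d$, and $0$ otherwise, to obtain a tree $T'$. By the preceding observation, $T'$ computes $f$, and since rank depends only on the tree's branching structure, $T'$ still has rank at most $d$, proving the first claim.

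For the polynomial-threshold bound, invoke the known correspondence (Ehrenfeucht--Haussler, Blum) that any Boolean function computed by a rank-$d$ decision tree is also computed by a $d$-decision list $(t_1,v_1),\ldots,(t_m,v_m)$, and then realize the list as $\sgn(p(x))$ for the polynomial $p(x) = \sum_{j=1}^{m} (-1)^{1-v_j}\,2^{m-j}\,t_j(x)$, whose sign on any $x$ is dominated by the lowest-indexed $t_j$ satisfied by $x$. Since each $t_j$ has at most $d$ literals, $\deg p \le d$. The only step with real content is the set-function/decision-tree correspondence and its reliance on monotonicity of $f$ (which is why the theorem is restricted to monotone functions); the remainder is a chain of existing results, with the one bookkeeping hazard being the sign convention in the PTF construction, which is resolved by a constant offset if the boundary case $p(x) = 0$ must be forced to evaluate correctly.
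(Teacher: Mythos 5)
Your proposal is correct and follows essentially the same route as the paper: reduce via Lemma~\ref{novalue} to a 1-goal function giving no value to 0's, convert it to a monotone submodular set function, apply Lemma~\ref{monotonesubmodrank} to get a rank-$d$ tree, threshold the leaf labels at $d$, and then pass through $d$-decision lists to a degree-$d$ polynomial threshold function. The only differences are cosmetic (you set the variables outside $S$ to $*$ rather than to 0, which is equivalent once $g$ gives no value to 0's, and you write out the explicit decision-list-to-PTF polynomial that the paper merely cites).
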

\begin{proof}
Consider a 1-goal function $g$ for a monotone Boolean function $f$, whose goal value is $d$.
By Lemma~\ref{novalue}, we can assume that the value of $g$ only increases on bits that are set to 1, and
stays the same on bits that are set to 0.
Let $V = \{x_1, \ldots, x_n\}$
Define a set function $h:2^V \rightarrow \{0, \ldots, d\}$,
such that for all $S \subseteq V$, $h(S)$ is equal to the value of $g$ on the partial assignment
setting the variables in $S$ to 1, and all other variables to 0.

By Lemma~\ref{monotonesubmodrank}, there is a decision tree computing $h$ with rank at most $d$. 
This same tree must compute the function $g$
(if at an internal node labeled $x_i$, you branch left if $x_i = 0$, and right if $x_i = 1$). 
For each leaf of the tree, change the value of the leaf to 0 if it is labeled with a value in ${0, \ldots, d-1}$,
and to 1 if it is labeled with the value $d$.
Since $d$ is the 1-goal value for $g$, on any assignment in $2^V$, the value computed by the tree will be 1
iff the assignment contains a 1-certificate for $f$.  It follows that a tree of rank at most $d$ computes $f$.

By~\cite{blumRank}, a function that can be computed by a tree of rank $d$ has an equivalent $d$-decision list. 
By a similar argument to that used in~\cite{ehrenfeuchtGeneral},
any function expressible by a $d$-decison list is also 
expressible as a degree-$d$ polynomial threshold function.

\end{proof}

By the above theorem, 
decision tree rank and threshold degree are lower bounds for
$\min\{\Gamma^1(f),\Gamma^0(f)\}$ for monotone functions.
However, these lower bounds cannot be more than $n$, so they are very
weak in cases where $\Gamma^1(f)$ and $\Gamma^0(f)$ are exponential in $n$.
For example, it is easy to construct a read-once function $f$ 
with exponential values for $\ds(f)$ and $\cs(f)$,
and by Corollary~\ref{cor:rof},
these values are equal to $\Gamma^0(f)$ and $\Gamma^1(f)$.


\section{Integer Program for Computing Goal Value}

The goal value for a Boolean function can be computed
(at least in principle) by solving an integer programming
problem.

Let us first recall the extension graph. Its vertices are
the partial assignments, and it has an edge $p \rightarrow q$
if $q$ can be obtained from $p$ by changing one $\ast$ to a
0 or a 1.  We imagine it to be drawn so that the more ``informative''
(higher weight) vertices are at the top.

Since a partial assignment is just
a length $n$ string over the alphabet $\{0,1,\ast\}$, it has
$3^n$ vertices.  


Let us now count the edges.  Each weight $k$ vertex has
$2(n-k)$ edges connecting to a vertex of weight one higher (pick
the variable to reveal, and then there are two ways to set it).
Therefore, the number of edges is
\begin{equation}
       2 \sum_{k=0}^{n-1} {n \choose k} 2^k (n-k) = 2n \cdot 3^{n-1}
\label{eq:edgecount}
\end{equation}
(To get the right hand side, rewrite the binomial as
${n \choose n-k} = n/(n-k) {n-1 \choose n-k-1} = n/(n-k) {n-1 \choose k}$.)

The ratio of edges to vertices is $\Theta(n)$, similar to the hypercube.

With this picture, we can imagine a goal function $g$ as a labeling
of this graph's vertices by integers, with 0 at the bottom (weight 0) vertex.
(We may as well take the minimum value to be 0.)

Monotonicity means that $g$ increases, or at least does not decrease,
along any upward path.

Submodularity is a little harder to picture.  Imagine a diagram
\begin{eqnarray}
                    &          & \beta'    \nonumber               \\
                    & \nearrow & \uparrow  \nonumber               \\
           \beta    &          & \alpha'   \label{eq:smalldiagram} \\
           \uparrow & \nearrow &           \nonumber               \\
           \alpha   &          &           \nonumber
\end{eqnarray}
where the vertical lines represent paths and the diagonal lines
are edges.  The diagonal lines are ``parallel'' in the sense that the
same variable is set, in the same way.  Then submodularity requires
that
$$
          g(\beta') - g(\beta) \le g(\alpha') - g(\alpha) .
$$
That is, for the alternating sum around the cycle (clockwise or counterclockwise),
\begin{equation}
          g(\beta') - g(\alpha') + g(\alpha) - g(\beta) \le 0.
\label{eq:cycle}
\end{equation}

One can see, by putting together small diagrams such as
\begin{eqnarray*}
                    &          & \beta'    \\
                    & \nearrow & \uparrow  \\
           \beta    &          & \gamma'   \\
           \uparrow & \nearrow & \uparrow  \\
           \gamma   &          & \alpha'   \\
           \uparrow & \nearrow &           \\
           \alpha   &          &        
\end{eqnarray*}
(The contributions at $\gamma$ and $\gamma'$ will cancel out.) that
it is enough, for submodularity, for the cycle property (\ref{eq:cycle}) to
hold for all ``small'' diagrams (\ref{eq:smalldiagram}) in which 
$\beta$ extends $\alpha$ by revealing the value of $x_i$, 
and $\alpha'$ extends $\alpha$
by revealing the value of $x_j$, $j \ne i$.

Using this, we can estimate the number of constraints needed for
an integer program for the minimal goal value.  We may assume that
$f$ is non-constant (otherwise, taking $g$ identically 0 gives the
optimal goal function).

Monotonicity must hold along all edges, which accounts for
$2n 3^{n-1}$ constraints, since there are this many edges.

Submodularity (the cycle property) must hold for all small diagrams.
When $n \ge 2$, the number of small diagrams is
$$
4 \sum_{k=0}^{n-2} {n \choose k} {n-k \choose 2} 2^k
= 2n(n-1) 3^{n-2}.
$$

Finally, there are function value constraints.
The values on certificates must equal $Q$,
and the values on non-certificates must be $\le Q-1$.
In particular, the value on the empty assignment must be
0.  Since there is a constraint for each vertex,
there are $3^n$ of these.

The integer program seeks to minimize $Q$.

We have therefore specified an integer program with
$3^n + 1$ variables and 
$$
2n3^{n-1} + 2n(n-1)3^{n-2} + 3^n = \Theta(n^2 3^n) 
$$
constraints.

The following table gives exact values for the above expressions.

$$
\begin{array}{crr}
                n  & \hbox{\# variables}  & \hbox{\# constraints }  \\
                   &                      &                         \\
                1  &     4                &     5                   \\
                2  &    10                &     25                  \\
                3  &    28                &     117                 \\
                4  &    82                &     513                 \\
                5  &   244                &     2133                \\
                6  &   730                &     8505                \\
                7  &  2188                &     32805               \\
                8  &  6562                &     123201              \\
                9  & 19684                &     452709              \\
               10  & 59050                &     1633689        
\end{array}
$$

(In fact, the integer program could be made somewhat smaller.  For example, we
don't need a separate variable for $Q$.)

We do not know how ``hard'' such integer programs are
to solve.  Since the constraints all have $\pm 1$ coefficients,
it may be of a special type that is tractable.  It could also be
used ``as is'' for very small $n$.  


\section{Conclusion}
In this paper, we have shown a number of results concerning the goal value
of Boolean functions.  Nevertheless, fundamental questions remain open, such as the
uniqueness of goal value functions, the maximum possible goal value of a Boolean function,
and the expected goal value of a random Boolean function.
While we have exact bounds on the goal value of read-once 
and $k$-of-$n$ functions, there are other standard classes
of Boolean functions for which we do not have such bounds.
We expect further work on goal value to yield interesting results connecting goal value to other
complexity measures, and anticipate new connections to other algorithmic problems involving Boolean functions.

\section*{Acknowledgments}
L. Hellerstein was partially supported by NSF Award IIS-1217968.
E. Bach was partially supported by NSF Award CCF-1420750.
We thank Mike Saks for his observation that
goal value can be computed by an integer program.

\section*{References}
\bibliographystyle{elsarticle-harv}
\bibliography{goaljourn}

\end{document}